\definecolor{darkgreen}{rgb}{0.0,0,0.9}
\newtheorem{theorem}{Theorem}[section]
\newtheorem{lemma}{Lemma}[section]
\newtheorem{corollary}[theorem]{Corollary}
\theoremstyle{definition}
\newtheorem{definition}{Definition}[section]
\newtheorem{problem}{Problem}[section]
\newtheorem{remark}{Remark}
\newcommand{\GG}{{G}}
\newcommand{\OPT}{\mathsf{OPT}}
\newcommand{\GRD}{{\normalfont\text{greedy}}}
\newcommand{\esp}{$\mathrm{ESP}$}
\newcommand{\GGc}{{G}^\circ}
\newcommand{\EEi}{{E}_{\normalfont\text{init}}}
\newcommand{\VV}{{V}}
\newcommand{\TT}{{T}}
\newcommand{\EE}{{E}}
\newcommand{\xx}{\mathbf{x}}
\newcommand{\cc}{\mathbf{c}}
\newcommand{\dd}{\mathbf{d}}
\newcommand{\zero}{\mathbf{0}}
\newcommand{\LL}{\mathbf{L}}
\newcommand{\KK}{\mathbf{C}}
\newcommand{\MM}{\mathbf{M}}
\newcommand{\NN}{\mathbf{N}}
\newcommand{\ES}{\mathbf{S}}
\newcommand{\WW}{\mathbf{W}}
\newcommand{\yy}{\mathbf{y}}
\newcommand{\zz}{\mathbf{z}}
\newcommand{\pp}{\mathbf{p}}
\newcommand{\ppp}{\boldsymbol\pi}
\newcommand{\AAA}{\mathbf{A}}
\newcommand{\Acal}{\mathcal{A}}
\newcommand{\Bcal}{\mathcal{B}}
\newcommand{\aaa}{\mathbf{a}}
\newcommand{\ee}{\mathbf{e}}
\newcommand{\II}{\mathbf{I}}
\newcommand{\CCC}{\mathcal{M}}
\newcommand{\cQQ}{\mathcal{Q}}
\newcommand{\III}{\mathbf{\mathcal{I}}}
\newcommand{\logtree}{\mathsf{log\,tree}_{n,w}}
\newcommand{\tree}{\mathsf{tree}_{n,w}}
\newcommand{\logTG}{\mathsf{logTG}_{n,w}}
\DeclareMathAlphabet\mathbfcal{OMS}{cmsy}{b}{n}
\DeclareMathOperator*{\rank}{rank}
\newcommand{\diag}{\mathop{\mathrm{diag}}}
\DeclareFontFamily{OT1}{pzc}{}
\DeclareFontShape{OT1}{pzc}{m}{it}{<-> s * [1.2] pzcmi7t}{}
\DeclareMathAlphabet{\mathpzc}{OT1}{pzc}{m}{it}
\algrenewcommand\textproc{}%
\newcommand{\algrule}[1][.2pt]{\par\vskip.5\baselineskip\hrule height #1\par\vskip.5\baselineskip}
\title{\vspace{1cm}Maximizing the Weighted Number of Spanning
  Trees:\\Near-$t$-Optimal Graphs\thanks{Working paper. 
    \url{kasra.mail@gmail.com} -- \url{https://kasra.github.io}}}
\author{Kasra Khosoussi\thanks{Centre for Autonomous Systems (CAS), University
of Technology Sydney.} \and Gaurav S. Sukhatme\thanks{Department of Computer
  Science, University of
Southern California.} \and
Shoudong Huang\footnotemark[2] \and  Gamini
  Dissanayake\footnotemark[2]}
\begin{document}

\maketitle
\thispagestyle{empty}
\vspace{1.5cm}
\begin{abstract}
  Designing well-connected graphs is a fundamental problem that frequently
  arises in various contexts across science and engineering. 
  The weighted number of spanning
  trees, as a connectivity measure, emerges in numerous
  problems and plays a key role in, e.g., network 
  reliability under random edge failure, estimation over networks and
  D-optimal experimental designs. 
  This paper tackles
  the open problem of designing graphs with the maximum weighted number of
  spanning trees under various constraints.   We reveal several new structures, such as the
  log-submodularity of the weighted number of spanning trees in connected
  graphs. We then exploit these structures and design a pair of efficient
  approximation algorithms with performance guarantees and near-optimality
  certificates.
  Our results can be readily applied to a wide
  verity of applications involving graph synthesis and graph sparsification
  scenarios.
\end{abstract}
\clearpage
\thispagestyle{empty}
\tableofcontents
\clearpage

\section{Introduction}
Various graph connectivity measures have been studied and used in
different contexts. Among them are the combinatorial measures, such as
vertex/edge-connectivity, as well as spectral notions, like
algebraic connectivity \cite{Godsil2001}.  As a connectivity measure, the {number of
spanning trees} (sometimes referred to as \emph{graph complexity} or
\emph{tree-connectivity}) stands out in this list since despite its combinatorial
origin, it can also be characterized solely based on the spectrum of
graph Laplacian. It has been shown that tree-connectivity is associated with
D-optimal (determinant-optimal) experimental designs
\cite{gaffke1982d,cheng1981maximizing,bailey2009combinatorics,Pukelsheim1993}.
The number of spanning trees also appears in
the study of \emph{all-terminal network reliability} under (i.i.d.) random edge failure
(defined as the probability of network being connected)
\cite{kelmans1983multiplicative,weichenberg2004high}.  In particular, it has
been proved that for a given number of edges and vertices, the
\emph{uniformly-most reliable} network, \emph{upon existence}, must have the maximum
number of spanning trees
\cite{bauer1987validity,myrvold1996reliable,boesch2009survey}.
The graph with the maximum number of spanning trees among a finite set of graphs
(e.g., graphs with $n$ vertices and $m$ edges) is called
\emph{$t$-optimal}.
The problem of identifying $t$-optimal graphs under a $(n,m)$ constraint
remains open and has been solved {only} for specific pairs of $(n,m)$; see,
e.g.,
\cite{shier1974maximizing,cheng1981maximizing,kelmans1996graphs,petingi2002new}.
We prove that the (weighted) number of spanning trees in connected graphs can be
posed as a monotone log-submodular function.
This structure enables us to design a complementary greedy-convex pair of approximate algorithms
to synthesize  near-$t$-optimal graphs under several constraints with approximation
guarantees and near-optimality certificates.
\subsection*{Notation}
Throughout this paper, bold lower-case and upper-case letters are reserved for
real vectors and matrices, respectively.
The standard basis for $\mathbb{R}^{n}$ is denoted by
$\{\ee_{i}^n\}_{i=1}^{n}$, and $\ee_{0}^n$ is defined to be the zero $n$-vector.
For any $n \in \mathbb{N}$, $[n]$ denotes the set $\mathbb{N}_{\leq n} = \{1,2,\dots,n\}$. Sets are shown by upper-case letters.
$|\mathcal{X}|$ denotes the cardinality of set $\mathcal{X}$. 
For any finite set $\mathcal{W}$, $\binom{\mathcal{W}}{k}$ is the set of all
$k$-subsets of $\mathcal{W}$.
The eigenvalues of symmetric matrix $\MM$ are denoted by
\mbox{$\lambda_1(\MM) \leq \dots \leq \lambda_n(\MM)$}. $\mathbf{1}$, $\II$ and
$\mathbf{0}$ denote the vector of all ones, identity and zero matrix with appropriate sizes,
respectively.  $\ES_1\succ\ES_2$ means $\ES_1 - \ES_2$ is positive-definite.
The Euclidean norm is denoted by
$\|\cdot\|$.
$\diag(\WW_i)_{i=1}^{k}$ is
the block-diagonal matrix with matrices $(\WW_i)_{i=1}^{k}$ as blocks on its
main diagonal. 
For any graph $\GG$, $\EE(\GG)$ denotes the edge set of $\GG$.
Finally, $\mathbb{S}^{n}_{\geq 0}$ and
$\mathbb{S}^{n}_{> 0}$ denote the set of symmetric positive semidefinite
and symmetric positive definite matrices in $\mathbb{R}^{n\times n}$, respectively.

\section{Background}
\subsection{Preliminaries}
Let $\GG = (\VV,\EE)$ be an undirected graph over
$\VV = [n]$ and with $|\EE| = m$ edges.
By assigning a positive weight to each edge of the graph through $w : \EE \to
\mathbb{R}_{> 0}$, we obtain
$\GG^w = (\VV,\EE,w)$. To shorten our notation let us define 
$w_{uv} \triangleq w(u,v) = w(v,u)$. As it will become clear shortly,
without loss of generality we can assume
$\GG$ is a simple graph since (i) loops do not affect the number of spanning trees,
and (ii) parallel edges can be replaced by a
single edge whose weight is the sum of the weights of the parallel edges.
\mbox{$\WW \triangleq
  \diag\left(w(e_1),\dots,w(e_m)\right)$} denotes the weight
  matrix in which $e_i \in \EE$ is the $i$th edge.
The degree of vertex \mbox{$v \in \VV$}
in ${\GG}$ is denoted by $\deg(v)$. 
Let $\tilde{\AAA}$ be the incidence matrix of $\GG$ after assigning arbitrary orientations
to its edges. The Laplacian matrix of $\GG$ is defined as $\tilde{\LL}
\triangleq \tilde{\AAA}\tilde{\AAA}^\top$.
For an arbitrary choice of $v_0 \in \VV$, let \mbox{${\AAA} \in
\{-1,0,1\}^{(n-1)
\times m}$} be the matrix obtained by removing the row that corresponds
to $v_0$ from $\tilde{\AAA}$. We call $\AAA$ the \emph{reduced incidence matrix}
of $\GG$ after \emph{anchoring} $v_0$. The \emph{reduced Laplacian matrix} of
$\GG$ is defined as $\LL \triangleq \AAA \AAA^\top$. $\LL$ is also known as 
the \emph{Dirichlet} or \emph{grounded Laplacian matrix} of $\GG$.
Note that $\LL$ can also be obtained by removing the row and column associated
to the anchor from the graph Laplacian matrix.
$\AAA$ is full
column rank and consequently $\LL$ is positive definite, iff ${\GG}$ is connected.
For weighted graphs,  \mbox{$\AAA \WW
\AAA^\top$} is the \emph{reduced weighted Laplacian} of $\GG^w$.
 Note that this is a natural generalization of $\LL$, and will reduce to
 its unweighted counterpart if all weights are equal to one (i.e., $\WW = \II$).
  The reduced (weighted) Laplacian matrix can
  be decomposed into the (weighted) sum of \emph{elementary reduced Laplacian
  matrices}:
  \begin{align}
    \LL = \sum_{\mathclap{\{u,v\} \in \EE}} w(u,v) \LL_{uv}
  \end{align} in which $\LL_{uv}\triangleq
  \aaa_{uv} \aaa_{uv}^\top$ and
  $\aaa_{uv} =\ee_u - \ee_v$ is the corresponding column of $\AAA$.
\subsection{Matrix-Tree Theorems}
\label{sec:mainA}
The spanning trees of ${\GG}$ are spanning subgraphs of ${\GG}$ that are also trees. Let $\mathpzc{T}_{\GG}$ and $t(\GG)\triangleq
|\mathpzc{T}_{\GG}|$ denote the
set of all spanning trees of $\GG$ and its number of spanning trees,
respectively.
Let $T_n$ and $K_n$ be, respectively, an arbitrary tree and the complete graph with $n$
vertices. The following statements hold.
\begin{enumerate}
  \item $t(\GG) \geq 0$, and $t(\GG) = 0$ iff $\GG$ is disconnected,
  \item $t(T_n) = 1$,
  \item $t(K_n) = n^{n-2}$ (Cayley's formula),
  \item if $\GG$ is connected, then $t(T_n) \leq t(\GG) \leq t(K_n)$,
  \item if $\GG_1$ is a spanning subgraph of $\GG_2$, then $t(\GG_1) \leq t(\GG_2)$.
\end{enumerate}
Therefore $t(\GG)$ is a sensible measure of
graph connectivity.
The following theorem by Kirchhoff provides an expression for computing $t(\GG)$.

\begin{theorem}[Matrix-Tree Theorem \cite{Godsil2001}]
  Let $\LL_\GG$ and $\tilde{\LL}_\GG$ be, respectively, the reduced Laplacian and the Laplacian matrix of any simple undirected graph
  $\GG$ after anchoring an arbitrary vertex out of its $n$ vertices. The following statements hold.
  \begin{enumerate}
    \normalfont
    \item $t(\GG) = \det(\LL_\GG)$,
    \item $t(\GG) = \frac{1}{n} \prod_{i=2}^{n}
      \lambda_i(\tilde{\LL}_\GG)$.\footnote{Recall that the Laplacian matrix of any connected graph has a zero eigenvalue
	with multiplicity one (see, e.g., \cite{Godsil2001}).}
  \end{enumerate}
  \label{th:max-tree}
\end{theorem}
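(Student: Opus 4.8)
The plan is to establish the two identities in turn: I would derive the first from the Cauchy--Binet formula, and then obtain the second as a corollary by reading off an elementary symmetric function of the spectrum of $\tilde{\LL}_\GG$. Throughout I write $\AAA$ for the reduced incidence matrix obtained after anchoring the chosen vertex $v_0$, so that $\LL_\GG = \AAA\AAA^\top$ with $\AAA \in \{-1,0,1\}^{(n-1)\times m}$, and for a set of edge-indices $S \in \binom{[m]}{n-1}$ I write $\AAA_S$ for the $(n-1)\times(n-1)$ submatrix of $\AAA$ formed by the columns indexed by $S$.

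For the first identity I would apply the Cauchy--Binet formula to the product $\AAA\AAA^\top$, which gives
\begin{align}
\det(\LL_\GG) = \det(\AAA\AAA^\top) = \sum_{S \in \binom{[m]}{n-1}} \det(\AAA_S)\,\det(\AAA_S^\top) = \sum_{S \in \binom{[m]}{n-1}} \det(\AAA_S)^2 .
\end{align}
It then suffices to prove the combinatorial key lemma: for every $(n-1)$-subset of edges $S$ one has $\det(\AAA_S)^2 = 1$ if the edges indexed by $S$ form a spanning tree of $\GG$, and $\det(\AAA_S)^2 = 0$ otherwise. Granting this, the right-hand side counts exactly the spanning trees, i.e.\ it equals $t(\GG)$, which completes Part~1.

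This key lemma is where the real work lies, and I expect it to be the main obstacle. A set $S$ of $n-1$ edges on $n$ vertices forms a spanning tree iff it is acyclic iff it is connected. If $S$ contains a cycle, the columns of $\AAA_S$ indexed by that cycle admit a nontrivial $\{-1,0,1\}$ linear dependence (the signed sum of edge vectors around the cycle vanishes), so $\det(\AAA_S) = 0$. If instead $S$ is a spanning tree, I would show $\det(\AAA_S) = \pm 1$ by induction on $n$: the tree has a leaf, and if that leaf differs from the anchor $v_0$ its row in $\AAA_S$ has a single nonzero entry $\pm 1$, so Laplace expansion along that row reduces the determinant to that of the reduced incidence matrix of the tree with the leaf deleted, which is $\pm 1$ by the inductive hypothesis. (This is precisely the total unimodularity of incidence matrices, specialized to the nonsingular case.) Squaring removes the sign ambiguity and yields the claim.

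For the second identity I would exploit the fact that Part~1 holds for \emph{every} choice of anchor. Anchoring vertex $i$ shows that the $i$th principal $(n-1)\times(n-1)$ minor of $\tilde{\LL}_\GG$ --- obtained by deleting its $i$th row and column --- equals $t(\GG)$, for each $i \in [n]$. Summing over $i$, the sum of all principal $(n-1)$-minors of $\tilde{\LL}_\GG$ equals $n\,t(\GG)$. On the other hand, this sum is exactly the $(n-1)$st elementary symmetric polynomial $e_{n-1}$ of the eigenvalues $\lambda_1 \leq \dots \leq \lambda_n$ of $\tilde{\LL}_\GG$, since it is the corresponding coefficient of its characteristic polynomial. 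Because $\tilde{\LL}_\GG\,\ones = \zero$ forces $\lambda_1 = 0$, every product of $n-1$ eigenvalues that includes $\lambda_1$ vanishes, so $e_{n-1} = \prod_{i=2}^{n}\lambda_i(\tilde{\LL}_\GG)$. Combining the two evaluations gives $n\,t(\GG) = \prod_{i=2}^{n}\lambda_i(\tilde{\LL}_\GG)$, which is the desired formula after dividing by $n$.
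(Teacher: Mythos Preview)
The paper does not supply its own proof of this theorem: it is stated as the classical Matrix--Tree Theorem and attributed to \cite{Godsil2001}, with no argument given in the body or the appendix. So there is nothing to compare against on the paper's side.

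Your proposal is the standard Cauchy--Binet proof and is correct. One small point worth tightening in the inductive step for Part~1: you write ``if that leaf differs from the anchor $v_0$'', but you should note explicitly that a tree on $n\geq 2$ vertices has at least two leaves, so one can always choose a leaf distinct from $v_0$; otherwise the induction could appear to stall. With that remark added, both parts go through as written.
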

The matrix-tree theorem can be naturally generalized to weighted graphs, where
each spanning tree is ``counted'' according to its \emph{value} defined below. 
\begin{definition}
  Suppose $\GG = (\VV,\EE,w)$ is a weighted graph with a non-negative weight
  function.  The value of each spanning tree of $\GG$ is measured by the
  following function,
  \begin{align}
   \mathbb{V}_w : \mathpzc{T}_\GG & \to
  \mathbb{R}_{\geq 0} \\
  T & \mapsto \displaystyle\prod_{\mathclap{e \in \, \EE(T)}}
  w(e).
    \label{}
  \end{align}
  Furthermore, we define the weighted number of trees as $t_{w}(\GG) \triangleq
  \sum_{{\TT \in \mathpzc{T}_\GG}} \mathbb{V}_w(\TT)$.
  \label{th:valDef}
\end{definition}
\begin{theorem}[Weighted Matrix-Tree Theorem \cite{Mesbahi2010}]
  For every simple weighted graph \mbox{$\GG = (\VV,\EE,w)$} with $w: \EE \to
  \mathbb{R}_{>0}$ we have $t_w(\GG) = \det \AAA \WW \AAA^\top$. 
  \label{th:wMT}
\end{theorem}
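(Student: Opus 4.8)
The plan is to reduce the identity to the classical enumeration of the nonsingular maximal minors of an incidence matrix, the bridge being the Cauchy--Binet formula. Since $\WW$ is diagonal with strictly positive entries, its principal square root $\WW^{1/2} = \diag(\sqrt{w(e_1)},\dots,\sqrt{w(e_m)})$ is well defined, and we may write $\AAA\WW\AAA^\top = \mathbf{B}\mathbf{B}^\top$ with $\mathbf{B} \triangleq \AAA\WW^{1/2} \in \mathbb{R}^{(n-1)\times m}$. First I would apply Cauchy--Binet to the product of the $(n-1)\times m$ matrix $\mathbf{B}$ with its transpose, obtaining
\[
  \det(\AAA\WW\AAA^\top) = \sum_{S \in \binom{\EE}{n-1}} \det(\mathbf{B}_S)^2,
\]
where $\mathbf{B}_S$ denotes the square $(n-1)\times(n-1)$ submatrix of $\mathbf{B}$ formed by the columns indexed by the edge subset $S$.

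Next I would factor the weights out of each term. Because right-multiplication by the diagonal $\WW^{1/2}$ scales the column of $\AAA$ associated with edge $e$ by $\sqrt{w(e)}$, we get $\det(\mathbf{B}_S) = \big(\prod_{e\in S}\sqrt{w(e)}\big)\det(\AAA_S)$, and hence $\det(\mathbf{B}_S)^2 = \big(\prod_{e\in S} w(e)\big)\det(\AAA_S)^2$. Substituting this into the Cauchy--Binet expansion cleanly isolates the purely combinatorial quantity $\det(\AAA_S)^2$ while exposing the product $\prod_{e\in S} w(e)$, which will become $\mathbb{V}_w(S)$ precisely on those $S$ that span a tree.

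The crux, and the step I expect to be the main obstacle, is the dichotomy
\[
  \det(\AAA_S)^2 = \begin{cases} 1 & \text{if the edges in } S \text{ form a spanning tree of } \GG,\\ 0 & \text{otherwise.}\end{cases}
\]
I would establish this classical fact directly. For the vanishing case, observe that $|S| = n-1$ edges on $n$ vertices that do not form a tree must contain a cycle; summing the columns $\aaa_{uv} = \ee_u - \ee_v$ around that cycle with the orientation signs induced by the cycle traversal yields the zero vector, so the selected columns are linearly dependent and $\det(\AAA_S)=0$. For the unimodular case, I would invoke (or reprove by induction on the minor size) the total unimodularity of the signed incidence matrix---each column has exactly one $+1$ and one $-1$, so any square submatrix either has an all-zero column, a column with a single nonzero entry along which one can expand, or columns whose two nonzero entries force the rows to sum to zero, giving determinant in $\{-1,0,1\}$ in every case. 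Combining this with the fact that a spanning tree is acyclic, whence its edge-columns are linearly independent and the corresponding minor is nonzero, forces $\det(\AAA_S)=\pm1$ and thus $\det(\AAA_S)^2 = 1$.

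Finally, assembling the three ingredients collapses the sum to
\[
  \det(\AAA\WW\AAA^\top) = \sum_{\substack{S\in\binom{\EE}{n-1} \\ S \text{ a spanning tree}}} \prod_{e\in S} w(e) = \sum_{T\in\mathpzc{T}_\GG} \mathbb{V}_w(T) = t_w(\GG),
\]
which is exactly the claim. As a consistency check, setting $\WW=\II$ collapses each surviving product to $1$ and recovers the unweighted Matrix--Tree Theorem of Theorem~\ref{th:max-tree}, so the argument simultaneously subsumes the classical case.
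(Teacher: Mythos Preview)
The paper does not actually prove this theorem; it is stated as a classical result with a citation to \cite{Mesbahi2010} and no proof is given in the appendix. Your argument is the standard Cauchy--Binet derivation of the weighted matrix--tree theorem and is correct.

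One small point of precision: when you justify total unimodularity by saying ``each column has exactly one $+1$ and one $-1$'', that description applies to the \emph{full} incidence matrix $\tilde{\AAA}$, not to the reduced matrix $\AAA$ used in the statement (columns corresponding to edges incident to the anchor $v_0$ have only a single nonzero entry in $\AAA$). This does not affect the argument---$\AAA$ is a submatrix of $\tilde{\AAA}$, and the three-case induction you sketch already accommodates single-entry columns---but it would read more cleanly if you either ran the TU induction directly on $\AAA$, or explicitly noted that total unimodularity is inherited by submatrices. Similarly, your claim that acyclicity of $S$ forces $\det(\AAA_S)\neq 0$ is correct but uses implicitly that deleting the anchor row does not drop the rank of the tree's incidence matrix (because $\mathbf{1}^\top \tilde{\AAA}=\mathbf{0}$); a one-line remark to that effect would make the step airtight.
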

Note that Theorem~\ref{th:wMT}
reduces to Theorem~\ref{th:max-tree} if $w(e) = 1$ for all $e \in \EE$.
Therefore, in the rest of this paper we focus
our attention mainly on weighted graphs.
\begin{definition}
  The weighted \emph{tree-connectivity} of graph $\GG$ is formally defined as
  \begin{equation}
    \tau_w(\GG) \triangleq 
    \begin{cases}
      \log t_w(\GG) & \text{if $t_w(\GG) > 0$,} \\
      0 & \text{otherwise.}
    \end{cases}
    \label{eq:treeConnectivity}
  \end{equation}
\end{definition}
\section{Tree-Connectivity}
\label{sec:main}
\begin{definition}
  \label{def:rand}
  Consider an arbitrary simple undirected graph $\GG^\circ$.
  Let $p_i$ be the probability assigned
  to the $i$th edge, and $\pp$ be
  the stacked vector of probabilities. $\GG \sim \mathbb{G}(\GG^\circ,\pp)$ indicates that
  \begin{enumerate}
    \item $\GG$ is a
  spanning subgraph of $\GGc$.
    \item The $i$th edge of $\GG^\circ$ appears in $\GG$ with
  probability $p_i$, independent of other edges.
  \end{enumerate}
\end{definition}
The naive procedure for computing the expected weighted number of spanning trees
in such random graphs involves a summation over
exponentially many terms.  Theorem~\ref{th:expected} offers an efficient and
intuitive way of computing this expectation in terms of $\GG^\circ$ and $\pp$.
\begin{theorem} 
 %The expected weighted number of spanning trees is given by
  For any $\mathbb{G}(\GGc,\pp)$ and $w : \EE(K_n) \to \mathbb{R}_{> 0}$,
    \begin{equation}
      {\mathbb{E}}_{{\GG \sim \mathbb{G}(\GG^\circ,\pp)}}
      \big[t_{w}(\GG)\big] = t_{w_p}(\GG^\circ),
      \label{}
    \end{equation}
 where $w_p(e_i) \triangleq p_i w(e_i)$ for all $e_i \in \EE(\GGc)$.
  \label{th:expected}
\end{theorem}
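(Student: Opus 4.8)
The plan is to sidestep the random index set $\mathpzc{T}_\GG$ entirely by expressing everything against the \emph{fixed} collection $\mathpzc{T}_{\GGc}$ of spanning trees of the base graph $\GGc$. The key structural observation is that, because every realization $\GG \sim \Gbb(\GGc,\pp)$ is a spanning subgraph of $\GGc$, a subgraph $\TT$ is a spanning tree of $\GG$ if and only if $\TT$ is a spanning tree of $\GGc$ \emph{all of whose edges survive} in $\GG$. Consequently, I would first rewrite the random quantity $t_w(\GG)$ as a deterministic sum over $\mathpzc{T}_{\GGc}$ gated by an indicator:
\begin{equation}
t_w(\GG) = \sum_{\TT \in \mathpzc{T}_{\GGc}} \mathbb{V}_w(\TT)\,\mathbbm{1}\big[\EE(\TT) \subseteq \EE(\GG)\big].
\end{equation}

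Second, I would apply linearity of expectation to move $\mathbb{E}$ inside this finite sum, reducing the problem to evaluating $\mathbb{E}\big[\mathbbm{1}[\EE(\TT) \subseteq \EE(\GG)]\big] = \mathbb{P}\big[\EE(\TT) \subseteq \EE(\GG)\big]$ for each fixed tree $\TT$. By the independence assumption in Definition~\ref{def:rand}, the event that every edge of $\TT$ appears in $\GG$ factorizes over edges, giving $\mathbb{P}\big[\EE(\TT) \subseteq \EE(\GG)\big] = \prod_{e_i \in \EE(\TT)} p_i$.

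Third, I would merge the two products: since $\mathbb{V}_w(\TT) = \prod_{e_i \in \EE(\TT)} w(e_i)$, combining it with $\prod_{e_i \in \EE(\TT)} p_i$ yields $\prod_{e_i \in \EE(\TT)} p_i\, w(e_i) = \prod_{e_i \in \EE(\TT)} w_p(e_i) = \mathbb{V}_{w_p}(\TT)$, where I use the definition $w_p(e_i) = p_i\, w(e_i)$. Summing over $\mathpzc{T}_{\GGc}$ then reproduces exactly $t_{w_p}(\GGc)$ by Definition~\ref{th:valDef}, which is the claim.

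The only genuinely delicate step is the first one: carefully justifying the correspondence between spanning trees of a realization $\GG$ and the indicator-selected spanning trees of $\GGc$, so that the random set $\mathpzc{T}_\GG$ may be legitimately replaced by the fixed set $\mathpzc{T}_{\GGc}$ \emph{before} taking expectations. Everything afterward is linearity of expectation together with the factorization of independent edge events. An alternative route through Theorem~\ref{th:wMT} also works: one writes the reduced weighted Laplacian of $\GG$ as $\sum_i X_i\, w(e_i)\, \LL_{e_i}$ with independent Bernoulli indicators $X_i$, and expands $\mathbb{E}\big[\det(\cdot)\big]$ via Cauchy--Binet; each surviving monomial corresponds to a spanning tree and is square-free, so $\mathbb{E}[X_i] = p_i$ substitutes cleanly. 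The combinatorial argument above is, however, shorter and more transparent.
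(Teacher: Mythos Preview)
Your proposal is correct and is essentially identical to the paper's own proof: the paper introduces the indicator $\mathbbm{1}_{\mathpzc{T}_\GG}(\TT)$ (which coincides with your $\mathbbm{1}[\EE(\TT)\subseteq\EE(\GG)]$), rewrites $t_w(\GG)$ as a sum over $\mathpzc{T}_{\GGc}$, applies linearity of expectation, uses independence to get $\mathbb{P}[\TT\in\mathpzc{T}_\GG]=\mathbb{V}_p(\TT)$, and then merges the products into $\mathbb{V}_{w_p}(\TT)$. Your Cauchy--Binet alternative also appears in the paper, but only later as a separate generalization.
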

Note that this expectation can now be computed in $\mathcal{O}(n^3)$ time for
general $\GG^\circ$.
\begin{lemma}
  Let $\GG^{+}$ be the graph obtained by adding $\{u,v\} \notin \EE$ with weight
  $w_{uv}$ to $\GG = (\VV,\EE,w)$.
  Let $\LL_\GG$ be the reduced Laplacian matrix 
  and $\aaa_{uv}$ be the corresponding column of
  the reduced incidence matrix of $\GG$ after anchoring an arbitrary vertex. If
  $\GG$ is connected,
  \begin{equation}
    \tau_{w}(\GG^+) = \tau_{w}(\GG) + \log(1+w_{uv}\Delta^{\GG}_{uv}),
    \label{eq:add}
  \end{equation}
  where $\Delta^{\GG}_{uv} \triangleq \aaa_{uv}^{\top}\LL^{-1}_\GG \aaa_{uv}$.
  \label{th:add}
\end{lemma}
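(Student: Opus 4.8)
The plan is to reduce the identity to a rank-one determinant update via the Weighted Matrix-Tree Theorem. Since $\GG$ is connected we have $t_w(\GG) > 0$, so Theorem~\ref{th:wMT} gives $\tau_w(\GG) = \log t_w(\GG) = \log \det \LL_\GG$, and the same theorem tells us $\LL_\GG \succ 0$, so $\LL_\GG^{-1}$ exists. Adding a single edge cannot disconnect a connected graph, so $\GG^+$ is connected as well and the identical reasoning yields $\tau_w(\GG^+) = \log \det \LL_{\GG^+}$. Hence both sides of \eqref{eq:add} are genuine logarithms of positive weighted tree counts, and the whole claim becomes a statement about $\det \LL_{\GG^+}$ versus $\det \LL_\GG$.

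First I would write down the reduced Laplacian of $\GG^+$ using the elementary decomposition $\LL = \sum_{\{u,v\}\in\EE} w(u,v)\LL_{uv}$ with $\LL_{uv} = \aaa_{uv}\aaa_{uv}^\top$. Keeping the \emph{same} anchor vertex for $\GG$ and $\GG^+$, appending the new edge $\{u,v\}$ with weight $w_{uv}$ simply adds one term to this sum, so that $\LL_{\GG^+} = \LL_\GG + w_{uv}\,\aaa_{uv}\aaa_{uv}^\top$, where $\aaa_{uv} = \ee_u - \ee_v$ is the corresponding column of the reduced incidence matrix. This exhibits $\LL_{\GG^+}$ as a symmetric rank-one perturbation of $\LL_\GG$. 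Applying the matrix determinant lemma to this update gives
\[
\det\bigl(\LL_\GG + w_{uv}\,\aaa_{uv}\aaa_{uv}^\top\bigr)
= \det(\LL_\GG)\,\bigl(1 + w_{uv}\,\aaa_{uv}^\top \LL_\GG^{-1}\aaa_{uv}\bigr)
= \det(\LL_\GG)\,\bigl(1 + w_{uv}\Delta^{\GG}_{uv}\bigr).
\]
Taking logarithms of both sides and substituting $\tau_w(\GG^+) = \log\det\LL_{\GG^+}$ and $\tau_w(\GG) = \log\det\LL_\GG$ then produces \eqref{eq:add} directly.

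There is no serious obstacle here; the argument is essentially bookkeeping plus one standard identity. The only points that require care are (i) verifying that the edge addition corresponds exactly to the rank-one term $w_{uv}\aaa_{uv}\aaa_{uv}^\top$ under a common anchoring, and (ii) checking that the logarithm on the right-hand side is well defined. The latter follows because $\LL_\GG \succ 0$ implies $\LL_\GG^{-1}\succ 0$, and since $\aaa_{uv}\neq\zero$ we get $\Delta^{\GG}_{uv} = \aaa_{uv}^\top \LL_\GG^{-1}\aaa_{uv} > 0$; combined with $w_{uv} > 0$ this forces $1 + w_{uv}\Delta^{\GG}_{uv} > 1$, so both the determinant lemma and the logarithm apply without degeneracy.
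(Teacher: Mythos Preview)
Your proof is correct and follows exactly the same approach as the paper: write $\LL_{\GG^+} = \LL_\GG + w_{uv}\,\aaa_{uv}\aaa_{uv}^\top$, apply the matrix determinant lemma (Lemma~\ref{th:rank1}), and take logarithms. Your version is simply more explicit about the well-definedness checks, which the paper leaves implicit.
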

\begin{lemma}
  Similar to Lemma~\ref{th:add}, let $\GG^{-}$ be the graph obtained by removing $\{p,q\} \in \EE$ with
  weight $w_{pq}$ from
  $\EE$. If $\GG$ is connected,
  \begin{equation}
  \tau_{w}(\GG^-) = \tau_{w}(\GG) + \log(1-w_{pq}\Delta^{\GG}_{pq}).
    \label{}
  \end{equation}
  \label{th:del}
\end{lemma}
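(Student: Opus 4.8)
The plan is to recognize the edge removal as a rank-one \emph{downdate} of the reduced weighted Laplacian and then invoke the weighted matrix-tree theorem (Theorem~\ref{th:wMT}) together with the matrix determinant lemma, mirroring the argument behind Lemma~\ref{th:add}. Writing $\LL_\GG = \AAA\WW\AAA^\top = \sum_{\{u,v\}\in\EE} w_{uv}\,\aaa_{uv}\aaa_{uv}^\top$, deleting the edge $\{p,q\}$ simply subtracts its elementary contribution, so the reduced weighted Laplacian of $\GG^-$ is
\begin{equation}
\LL_{\GG^-} = \LL_\GG - w_{pq}\,\aaa_{pq}\aaa_{pq}^\top .
\end{equation}
This is a symmetric rank-one perturbation of $\LL_\GG$, and $\LL_\GG$ is invertible because $\GG$ is connected.

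Next I would apply the matrix determinant lemma to this downdate. Since
\begin{equation}
\det\big(\LL_\GG - w_{pq}\,\aaa_{pq}\aaa_{pq}^\top\big) = \det(\LL_\GG)\,\big(1 - w_{pq}\,\aaa_{pq}^\top\LL_\GG^{-1}\aaa_{pq}\big) = t_w(\GG)\,\big(1 - w_{pq}\Delta^{\GG}_{pq}\big),
\end{equation}
where the last equality uses Theorem~\ref{th:wMT} and the definition $\Delta^{\GG}_{pq} = \aaa_{pq}^\top\LL_\GG^{-1}\aaa_{pq}$, a second application of Theorem~\ref{th:wMT} identifies the left-hand side with $t_w(\GG^-)$. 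Taking logarithms of $t_w(\GG^-) = t_w(\GG)\,(1 - w_{pq}\Delta^{\GG}_{pq})$ then yields the claimed identity for $\tau_w$.

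An equivalent route, which avoids re-deriving the determinant lemma, is to apply Lemma~\ref{th:add} in reverse: viewing $\GG$ as $\GG^-$ with the edge $\{p,q\}$ re-inserted gives $\tau_w(\GG) = \tau_w(\GG^-) + \log\!\big(1 + w_{pq}\Delta^{\GG^-}_{pq}\big)$. Converting $\Delta^{\GG^-}_{pq}$ back to $\Delta^{\GG}_{pq}$ requires the Sherman--Morrison formula applied to $\LL_{\GG^-}^{-1}$, which gives $\Delta^{\GG^-}_{pq} = \Delta^{\GG}_{pq}/(1 - w_{pq}\Delta^{\GG}_{pq})$ and hence $1 + w_{pq}\Delta^{\GG^-}_{pq} = (1 - w_{pq}\Delta^{\GG}_{pq})^{-1}$; rearranging recovers the same formula. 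I would present the first, direct route as the main proof and keep this as a consistency check.

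The one genuine subtlety, rather than algebra, is the piecewise definition of $\tau_w$. The identity $t_w(\GG^-) = t_w(\GG)\,(1 - w_{pq}\Delta^{\GG}_{pq})$ holds unconditionally, but passing to $\tau_w = \log t_w$ is valid only when $t_w(\GG^-) > 0$, i.e.\ when $\GG^-$ remains connected. The factor $w_{pq}\Delta^{\GG}_{pq}$ is precisely the probability that $\{p,q\}$ lies in a (weighted) random spanning tree and therefore belongs to $(0,1]$, attaining the value $1$ exactly when $\{p,q\}$ is a bridge. In that case $\GG^-$ is disconnected, so $t_w(\GG^-) = 0$ and the right-hand side $\log(1 - w_{pq}\Delta^{\GG}_{pq})$ diverges to $-\infty$ while $\tau_w(\GG^-)$ is set to $0$ by convention. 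I would therefore state and prove the lemma under the (implicit) assumption that $\GG^-$ is connected, equivalently $w_{pq}\Delta^{\GG}_{pq} < 1$, which guarantees a finite logarithm and makes the two sides agree; this is the only step where care is needed.
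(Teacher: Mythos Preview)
Your main route---writing $\LL_{\GG^-} = \LL_\GG - w_{pq}\aaa_{pq}\aaa_{pq}^\top$, applying the matrix determinant lemma, and taking logarithms---is exactly the argument the paper intends (it simply says the proof is ``similar to the proof of Lemma~\ref{th:add}''). Your additional discussion of the bridge case and the Sherman--Morrison consistency check goes beyond what the paper records, but the core proof is the same.
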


\begin{corollary}
   Define
    $\mathpzc{T}_\GG^{uv} \triangleq \Big\{T \in \mathpzc{T}_\GG :  \{u,v\} \in
  \EE(T) \Big\}$.
  Then we have
  \begin{equation}
    \Delta_{uv}^{\GG} = {|\mathpzc{T}_{\GG}^{uv}|}/{|\mathpzc{T}_{\GG}|} =
    {|\mathpzc{T}_{\GG}^{uv}|}/{t(\GG)}.
    \label{}
  \end{equation}
  Similarly, for weighted graphs we have
  \begin{equation}
    w_{uv}\Delta_{uv}^{\GG} = \frac{\sum_{T \in \mathpzc{T}_{\GG}^{uv}}
    \mathbb{V}_w(T)}{\sum_{T \in \mathpzc{T}_{\GG}^{\phantom{uv}}} \mathbb{V}_w(T)} = \frac{\sum_{T \in \mathpzc{T}_{\GG}^{uv}}
    \mathbb{V}_w(T)}{t_w(\GG)}.
    \label{}
  \end{equation}
\end{corollary}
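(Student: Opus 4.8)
The plan is to establish the weighted identity first and then recover the unweighted one as the special case $\WW = \II$. The starting point is a purely combinatorial decomposition of $\mathpzc{T}_\GG$ according to whether a fixed edge $\{u,v\} \in \EE$ is used. Writing $\GG^-$ for the graph obtained by deleting $\{u,v\}$ (as in Lemma~\ref{th:del}), every spanning tree of $\GG$ either contains $\{u,v\}$ --- i.e.\ lies in $\mathpzc{T}_\GG^{uv}$ --- or avoids it, in which case it is a spanning tree of $\GG^-$, and conversely every spanning tree of $\GG^-$ is a spanning tree of $\GG$ that avoids $\{u,v\}$. Since deleting an edge not used by a tree leaves its value $\mathbb{V}_w$ unchanged, summing values over this partition gives
\begin{equation}
  t_w(\GG) = \sum_{T \in \mathpzc{T}_\GG^{uv}} \mathbb{V}_w(T) + \sum_{T \in \mathpzc{T}_{\GG^-}} \mathbb{V}_w(T) = \sum_{T \in \mathpzc{T}_\GG^{uv}} \mathbb{V}_w(T) + t_w(\GG^-).
\end{equation}

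Next I would quantify $t_w(\GG^-)$ via the Dirichlet Laplacian. Since $\LL_{\GG^-} = \LL_\GG - w_{uv}\LL_{uv} = \LL_\GG - w_{uv}\aaa_{uv}\aaa_{uv}^\top$, the matrix determinant lemma (equivalently, exponentiating the identity of Lemma~\ref{th:del} with $\{p,q\} = \{u,v\}$ and invoking Theorem~\ref{th:wMT}) yields
\begin{equation}
  t_w(\GG^-) = \det\!\big(\LL_\GG - w_{uv}\aaa_{uv}\aaa_{uv}^\top\big) = \big(1 - w_{uv}\,\Delta_{uv}^\GG\big)\, t_w(\GG).
\end{equation}
I would note that this determinant form is valid even when $\GG^-$ is disconnected --- in which case both sides vanish, $\{u,v\}$ is a bridge, and $w_{uv}\Delta_{uv}^\GG = 1$ --- which sidesteps the domain restriction of the logarithm in Lemma~\ref{th:del}. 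Substituting into the decomposition above and rearranging gives $\sum_{T \in \mathpzc{T}_\GG^{uv}} \mathbb{V}_w(T) = w_{uv}\Delta_{uv}^\GG\, t_w(\GG)$, which is exactly the claimed weighted identity after dividing by $t_w(\GG) > 0$.

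Finally, the unweighted identity follows by setting $w \equiv 1$: then $w_{uv} = 1$, $\mathbb{V}_w(T) = 1$ for every tree, and $t_w = t$, so $\Delta_{uv}^\GG = |\mathpzc{T}_\GG^{uv}|/t(\GG)$. I do not anticipate a genuine difficulty here; the one point demanding care is the value-preserving bijection between the trees avoiding $\{u,v\}$ and $\mathpzc{T}_{\GG^-}$, together with the edge case where $\GG^-$ is disconnected --- which is precisely why I would phrase the pivotal step at the level of determinants rather than through $\tau_w$. An alternative, equally short route I would keep in reserve differentiates $\log t_w(\GG) = \tau_w(\GG)$ in $w_{uv}$: Jacobi's formula gives $\partial_{w_{uv}}\log\det\LL_\GG = \aaa_{uv}^\top\LL_\GG^{-1}\aaa_{uv} = \Delta_{uv}^\GG$, while differentiating $t_w(\GG) = \sum_{T} \prod_{e\in \EE(T)} w(e)$ term by term produces $w_{uv}^{-1}\sum_{T\in\mathpzc{T}_\GG^{uv}}\mathbb{V}_w(T)$, and equating the two recovers the same formula.
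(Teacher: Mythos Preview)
Your proof is correct and follows the route the paper implicitly has in mind: the paper states this result as an unproved corollary of Lemmas~\ref{th:add} and~\ref{th:del}, and your argument is precisely the natural derivation from Lemma~\ref{th:del} (exponentiated via the matrix determinant lemma) combined with the partition $\mathpzc{T}_\GG = \mathpzc{T}_\GG^{uv} \sqcup \mathpzc{T}_{\GG^-}$. Your care with the bridge case and the alternative derivative-based argument are both sound and go slightly beyond what the paper spells out.
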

Lemmas~\ref{th:add} and \ref{th:del} imply that $w_{uv} \Delta_{uv}^{\GG}$
determines the change in tree-connectivity after adding or removing an edge.
This
term is known as the \emph{effective resistance} between $u$ and $v$. If $\GG$
is an electrical circuit where each edge represents a resistor with
a conductance equal to its weight, then $w_{uv}\Delta_{uv}^{\GG}$ is equal to
the electrical resistance across $u$ and $v$. The effective resistance 
also emerges as a key factor in various other contexts; see, e.g.,
\cite{ghosh2008minimizing,Barooah2007,lovasz1993random}.
Note that although we derived $\Delta_{uv}^{\GG}$ using the
  \emph{reduced} graph Laplacian, it is more common to define the effective
  resistance using the pseudoinverse of graph
  Laplacian $\tilde{\LL}_{\GG}$ \cite{ghosh2008minimizing}. 

Now, on a seemingly unrelated note, we turn our attention to structures
associated to tree-connectivity when seen as a set function.

\begin{definition}
  Let $\VV$ be a set of $n \geq 2$ vertices.
  Denote by $\GG_{\EE}$ the graph $(\VV,\EE)$ for any $\EE \in \EE(K_n)$.
  For any $w : 2^{\EE(K_n)}
  \to \mathbb{R}_{>0}$ define
  \begin{align}
    \mathsf{tree}_{n,w} : 2^{\EE(K_n)} & \to
    \mathbb{R}_{\geq 0} \nonumber \\
    \EE & \mapsto t_w(\GG_{\EE}),
    \label{eq:origTreeSub} \\[0.2cm]
    \mathsf{log\,tree}_{n,w} : 2^{\EE(K_n)} & \to
    \mathbb{R} \nonumber \\
    \EE & \mapsto \tau_w(\GG_{\EE}).
    \label{eq:origTreeSubSub}
  \end{align}
\end{definition}

\begin{definition}[Tree-Connectivity Gain]
  Suppose a \emph{connected} base graph $(\VV,\EEi)$ with $n \geq 2$ vertices
  and an arbitrary positive weight function $w: \EE(K_n)
  \to \mathbb{R}_{>0}$ are given. Define
  \begin{align}
    \logTG : 2^{\EE(K_n)} & \to \mathbb{R}_{\geq 0} \nonumber \\
    \EE & \mapsto \logtree({\EE \cup \EEi})  - \logtree({\EEi}).
    \label{}
  \end{align}
\end{definition}
\begin{definition} Suppose $\mathcal{W}$ is a finite set.
  For any $\xi : 2^{\mathcal{W}} \to \mathbb{R}$,
  \begin{enumerate}
    \item $\xi$ is called \emph{normalized} iff $\xi(\varnothing) = 0$.
    \item $\xi$ is called \emph{monotone} if $\xi(\Bcal) \geq \xi(\Acal)$ for every
      $\Acal$ and $\Bcal$ s.t. $\Acal \subseteq \Bcal \subseteq \mathcal{W}$.
    \item $\xi$ is called  \emph{submodular} iff 
      for every
      $\Acal$ and $\Bcal$ s.t. $\Acal \subseteq \Bcal \subseteq \mathcal{W}$ and
      \mbox{$\forall s \in
	\mathcal{W}
      \setminus \Bcal$}
      we have,
      \begin{equation}
	\xi(\Acal \cup \{s\}) - \xi(\Acal) \geq \xi(\Bcal \cup \{s\}) -
	\xi(\Bcal).
	\label{}
      \end{equation}
    \item $\xi$ is called \emph{supermodular} iff $-\xi$ is submodular. 
    \item $\xi$ is called \emph{log-submodular} iff $\xi$ is positive and
      $\log \xi$
      is submodular.
  \end{enumerate}
\end{definition}
\begin{theorem}
  $\tree$
  is normalized, monotone and supermodular.
  \label{th:treeSupermodular}
\end{theorem}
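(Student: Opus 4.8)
The plan is to verify the three properties separately, dispatching normalization and monotonicity directly from the sum-over-trees representation $t_w(\GG) = \sum_{T \in \mathpzc{T}_\GG} \mathbb{V}_w(T)$ and reserving the real work for supermodularity. Normalization is immediate: for $n \geq 2$ the edgeless graph $\GG_\varnothing$ has no spanning tree, so $\tree(\varnothing) = t_w(\GG_\varnothing) = 0$. Monotonicity is equally direct, since enlarging the edge set can only enlarge the set $\mathpzc{T}_\GG$ of spanning trees---every spanning tree of a subgraph remains a spanning tree of the supergraph---while each value $\mathbb{V}_w(T)$ is strictly positive, so the sum cannot decrease; this also follows from property~5 of the preliminaries.

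For supermodularity the key step is a combinatorial formula for the marginal gain of adding a single edge. Fix a pair $s = \{u,v\}$ and an edge set $\EE$ with $s \notin \EE$, and abbreviate $\GG = \GG_\EE$ and $\GG^+ = \GG_{\EE \cup \{s\}}$. Sorting the spanning trees of $\GG^+$ by whether they contain $s$, those avoiding $s$ are precisely the spanning trees of $\GG$, whereas each tree $T$ containing $s$ corresponds bijectively, via $T \mapsto T - s$, to a spanning $2$-forest of $\GG$ in which $u$ and $v$ lie in distinct components; moreover $\mathbb{V}_w(T) = w_{uv}\,\mathbb{V}_w(T - s)$, where $\mathbb{V}_w$ denotes the product of edge weights (now read on forests as well). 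Writing $F_{uv}(\GG)$ for the weighted number of such $(u,v)$-separating spanning $2$-forests of $\GG$, summing the values yields the identity
\begin{equation}
  \tree(\EE \cup \{s\}) - \tree(\EE) = w_{uv}\,F_{uv}(\GG_\EE).
\end{equation}

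It then remains to observe that $F_{uv}$ is monotone in the edge set whenever $s$ is absent. Indeed, if $\Acal \subseteq \Bcal$ and $s \notin \Bcal$, then every spanning $2$-forest of $\GG_\Acal$ separating $u$ and $v$ is also such a forest in the supergraph $\GG_\Bcal$, with identical positive value, so $F_{uv}(\GG_\Bcal) \geq F_{uv}(\GG_\Acal)$. Combining this with the marginal-gain identity gives
\begin{equation}
  \tree(\Bcal \cup \{s\}) - \tree(\Bcal)
  = w_{uv} F_{uv}(\GG_\Bcal)
  \geq w_{uv} F_{uv}(\GG_\Acal)
  = \tree(\Acal \cup \{s\}) - \tree(\Acal),
\end{equation}
i.e. the marginal gains are nondecreasing in the set, which is exactly the condition that $-\tree$ be submodular. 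The one place I expect to need care is the bookkeeping when $\GG_\Acal$ is disconnected, where $t_w$ vanishes: there I would check that the marginal-gain identity still holds, which amounts to confirming that $F_{uv}(\GG_\Acal)$ is zero unless $\GG_\Acal$ has exactly two components with $u$ and $v$ in different ones---in agreement with the degenerate behaviour of $t_w$ on disconnected graphs.
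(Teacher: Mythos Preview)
Your proof is correct and follows essentially the same approach as the paper. Both arguments compute the marginal gain of adding $s=\{u,v\}$ as the total value of spanning trees of $\GG^{+}$ that use $s$, and then observe that this collection is monotone in the underlying edge set; you simply pass through the bijection $T\mapsto T-s$ to rephrase this as a weighted count of $(u,v)$-separating spanning $2$-forests of $\GG$, whereas the paper works directly with $\mathpzc{T}_{\GG^{+}}^{s}$ and the inclusion $\mathpzc{T}_{\GG_{\Acal}^{+}}^{s}\subseteq\mathpzc{T}_{\GG_{\Bcal}^{+}}^{s}$.
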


\begin{theorem}
  $\logTG$ is normalized, monotone and
  submodular.
  \label{th:logTG}
\end{theorem}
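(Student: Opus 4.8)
The plan is to verify the three properties in turn, treating submodularity as the crux. Since $(\VV,\EEi)$ is connected, every graph of the form $\GG_{\EE\cup\EEi}$ is connected, so throughout we have $\logtree(\EE\cup\EEi) = \tau_w(\GG_{\EE\cup\EEi}) = \log t_w(\GG_{\EE\cup\EEi})$ finite, and the constant $\logtree(\EEi)$ merely shifts $\logtree(\cdot\cup\EEi)$. Normalization is then immediate: $\logTG(\varnothing) = \logtree(\EEi) - \logtree(\EEi) = 0$.

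For monotonicity, fix $\Acal\subseteq\Bcal\subseteq\EE(K_n)$. The difference $\logTG(\Bcal) - \logTG(\Acal) = \logtree(\Bcal\cup\EEi) - \logtree(\Acal\cup\EEi)$ is obtained by adding the edges of $(\Bcal\cup\EEi)\setminus(\Acal\cup\EEi)$ to the connected graph $\GG_{\Acal\cup\EEi}$ one at a time. By Lemma~\ref{th:add}, each such addition of an edge $\{u,v\}$ changes tree-connectivity by $\log(1 + w_{uv}\Delta^{\GG}_{uv}) \geq 0$, because $w_{uv} > 0$ and $\Delta^{\GG}_{uv} = \aaa_{uv}^\top\LL_\GG^{-1}\aaa_{uv} \geq 0$ (the reduced Laplacian of a connected graph is positive definite). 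Hence $\logTG(\Bcal) \geq \logTG(\Acal)$.

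For submodularity, fix $\Acal\subseteq\Bcal$ and $s = \{u,v\}\in\EE(K_n)\setminus\Bcal$, and write $\GG_A\triangleq\GG_{\Acal\cup\EEi}$ and $\GG_B\triangleq\GG_{\Bcal\cup\EEi}$. If $s\in\EEi$ both marginal gains vanish and there is nothing to prove, so assume $s\notin\EEi$; then $s$ is absent from both $\GG_A$ and $\GG_B$. Cancelling the constant $\logtree(\EEi)$, the claim reduces to
\begin{equation}
  \logtree(\Acal\cup\{s\}\cup\EEi) - \logtree(\Acal\cup\EEi) \;\geq\; \logtree(\Bcal\cup\{s\}\cup\EEi) - \logtree(\Bcal\cup\EEi),
\end{equation}
and, applying Lemma~\ref{th:add} to each side, this is exactly $\log(1 + w_{uv}\Delta^{\GG_A}_{uv}) \geq \log(1 + w_{uv}\Delta^{\GG_B}_{uv})$. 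Since $\log$ is increasing and $w_{uv} > 0$, it suffices to prove the effective-resistance monotonicity $\Delta^{\GG_A}_{uv} \geq \Delta^{\GG_B}_{uv}$, i.e. $\aaa_{uv}^\top\LL_{\GG_A}^{-1}\aaa_{uv} \geq \aaa_{uv}^\top\LL_{\GG_B}^{-1}\aaa_{uv}$.

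The final step is where the real content lies. Because $\Acal\subseteq\Bcal$, the reduced weighted Laplacians differ by $\LL_{\GG_B} - \LL_{\GG_A} = \sum_{\{p,q\}\in\Bcal\setminus\Acal} w_{pq}\,\aaa_{pq}\aaa_{pq}^\top \succeq \mathbf{0}$, so $\mathbf{0}\prec\LL_{\GG_A}\preceq\LL_{\GG_B}$ in the Loewner order. Invoking the operator-antitonicity of matrix inversion on positive-definite matrices (if $\mathbf{0}\prec\mathbf{X}\preceq\mathbf{Y}$ then $\mathbf{Y}^{-1}\preceq\mathbf{X}^{-1}$) yields $\LL_{\GG_A}^{-1}\succeq\LL_{\GG_B}^{-1}$, and evaluating this Loewner inequality on the single vector $\aaa_{uv}$ gives the required $\aaa_{uv}^\top\LL_{\GG_A}^{-1}\aaa_{uv} \geq \aaa_{uv}^\top\LL_{\GG_B}^{-1}\aaa_{uv}$. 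I expect this antitonicity step to be the main obstacle to state cleanly: everything else is bookkeeping layered on top of Lemma~\ref{th:add}, but this step is precisely the physically intuitive fact that adding conductances can only lower the effective resistance between $u$ and $v$, and its clean justification rests on operator monotonicity of inversion rather than on any property of $\tree$ itself.
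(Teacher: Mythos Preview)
Your proof is correct and follows essentially the same route as the paper: normalize trivially, reduce monotonicity and submodularity via Lemma~\ref{th:add} to the nonnegativity and antitonicity of the effective resistance, and establish the latter by noting that $\LL_{\GG_B}\succeq\LL_{\GG_A}$ together with operator antitonicity of inversion on positive-definite matrices (the paper's Lemma~\ref{th:inverseOp}). The only cosmetic difference is that the paper states monotonicity for a single edge while you iterate over $\Bcal\setminus\Acal$, and the paper packages the Loewner comparison $\LL_{\GG_B}\succeq\LL_{\GG_A}$ into a separate lemma (Lemma~\ref{th:subgraph}) rather than writing out the sum of rank-one terms inline.
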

Corollary~\ref{cor:expected} follows directly from
Theorems~\ref{th:expected}, \ref{th:treeSupermodular} and \ref{th:logTG}.
\begin{corollary}
  \label{cor:expected}
  The expected weighted number of spanning trees 
   in random graphs is normalized, monotone and supermodular
   when seen as a set function similar to $\tree$. Moreover, the expected
   weighted number of spanning trees can be posed as a log-submodular function
   similar to $\logTG$.
\end{corollary}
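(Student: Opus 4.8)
The plan is to reduce the statement to Theorems~\ref{th:treeSupermodular} and~\ref{th:logTG} by a change of weight function, with Theorem~\ref{th:expected} serving as the bridge. First I would fix a probability vector $\pp$ with every entry $p_i \in (0,1]$ over the ground set $\EE(K_n)$, and view the expected weighted number of spanning trees as the set function
\begin{equation}
  F : \EE \mapsto \mathbb{E}_{\GG \sim \Gbb(\GG_\EE, \pp)}\big[t_w(\GG)\big],
\end{equation}
where $\GG_\EE = (\VV,\EE)$ is the spanning subgraph of $K_n$ with edge set $\EE \subseteq \EE(K_n)$ and $\pp$ is restricted to $\EE$. This makes precise the sense in which the expectation is ``seen as a set function.''

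Next, applying Theorem~\ref{th:expected} to each base graph $\GG_\EE$ yields $F(\EE) = t_{w_p}(\GG_\EE) = \mathsf{tree}_{n,w_p}(\EE)$, where $w_p(e_i) = p_i w(e_i)$. The one point that must be checked is that $w_p$ is a genuine \emph{positive} weight function, since this is exactly the hypothesis under which Theorems~\ref{th:treeSupermodular} and~\ref{th:logTG} are stated: because $p_i > 0$ and $w(e_i) > 0$, indeed $w_p : \EE(K_n) \to \mathbb{R}_{>0}$. (Any edge with $p_i = 0$ never appears and may be deleted from the ground set, reducing to the strictly positive case.) I expect this verification --- that the reweighting preserves positivity --- to be the only real obstacle, and it is a mild one.

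The supermodular half then follows at once: by Theorem~\ref{th:treeSupermodular} the function $\mathsf{tree}_{n,w_p}$ is normalized, monotone and supermodular, and $F = \mathsf{tree}_{n,w_p}$ inherits all three. For the log-submodular half I would instead invoke Theorem~\ref{th:logTG} with $w_p$ in place of $w$: fixing a connected base graph $(\VV,\EEi)$, the gain
\begin{equation}
  \mathsf{logTG}_{n,w_p}(\EE) = \mathsf{log\,tree}_{n,w_p}(\EE \cup \EEi) - \mathsf{log\,tree}_{n,w_p}(\EEi)
\end{equation}
is normalized, monotone and submodular. Since $\log F = \mathsf{log\,tree}_{n,w_p}$, this exhibits the expected weighted number of spanning trees as a log-submodular function in exactly the same sense as $\logTG$, completing the argument.
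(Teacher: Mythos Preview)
Your proposal is correct and matches the paper's own approach: the paper states only that Corollary~\ref{cor:expected} ``follows directly from Theorems~\ref{th:expected}, \ref{th:treeSupermodular} and \ref{th:logTG},'' and your argument is precisely the natural unpacking of that sentence---use Theorem~\ref{th:expected} to rewrite the expectation as $\mathsf{tree}_{n,w_p}$, then invoke Theorems~\ref{th:treeSupermodular} and~\ref{th:logTG} with the reweighted (still positive) function $w_p$. Your explicit check that $w_p > 0$ (and the remark that edges with $p_i = 0$ may be dropped) is a welcome clarification that the paper leaves implicit.
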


\section{\esp{}: Edge Selection Problem}
\label{sec:esp}
\subsection{Problem Definition}
Suppose a connected base graph is given. The edge selection problem (\esp{}) is
a combinatorial optimization problem whose goal is to pick the optimal $k$-set
of edges from a given candidate set of new edges such that the weighted number
of spanning trees after adding those edges to the base graph is maximized.
\begin{problem}[\esp{}]
  Let $\GG_\text{init} = (\VV,\EEi,w)$ be a given connected graph where $w : \EE(K_n) \to
  \mathbb{R}_{>0}$.
  Consider the following scenarios.
  \begin{enumerate}
    \item $k$-\esp{}$^+$: For some $\CCC^+ \subseteq \EE(K_n) \setminus
      \EEi$,
    \begin{equation}
      \begin{aligned}
	& \underset{\EE \subseteq \CCC^+}{\text{maximize}}
	& & t_w(\GG_{\EEi \cup \EE})\\
	& \text{subject to}
	&& |\EE| = k.
      \end{aligned}
      \label{eq:addEdge}
    \end{equation}
    \item $k$-\esp{}$^-$: For some $\CCC^- \subseteq \EEi$,
    \begin{equation}
      \begin{aligned}
	& \underset{\EE \subseteq \CCC^-}{\text{maximize}}
	& & t_w(\GG_{\EEi \setminus \EE})\\
	& \text{subject to}
	&& |\EE| = k.
      \end{aligned}
      \label{eq:delEdge}
    \end{equation}
  \end{enumerate}
\end{problem}
\begin{remark}
  It is easy to see that every instance of $k$-\esp{}$^-$ can be expressed as
  an instance of $d$-\esp{}$^+$ problem for a different base graph, some $d$ and
  a candidate set $\CCC^+$ 
  (and
  vice versa). 
  \label{th:venn}
\end{remark}
\begin{remark}
  The open problem of identifying $t$-optimal graphs
  among all graphs with $n$ vertices and $m$ edges \cite{boesch2009survey}
  is an instance of $k$-\esp{}$^+$ with $k = m$, $\EEi = \varnothing$ and
  $\CCC^+ = \EE(K_n)$.
  \label{th:fullEsp}
\end{remark}
Remarks~\ref{th:venn} and \ref{th:fullEsp} ensure that any algorithm designed
for solving $k$-$\mathrm{ESP}^+$ carries over to the other forms of
$\mathrm{ESP}$.  Therefore, although many graph sparsification and edge pruning
scenarios can be naturally stated as a $k$-\esp{}$^-$, in the rest of this paper
we focus our attention mainly on $k$-\esp{}$^+$.
\subsection{Exhaustive Search}
The brute force algorithm for solving $k$-\esp{}$^+$ requires computing the weighted 
tree-connectivity of every $k$-subset of the candidate set.
$t_w(\GG)$ can be computed by performing a Cholesky decomposition on the reduced
weighted Laplacian matrix which requires $\mathcal{O}(n^3)$ time in general.
This time may significantly reduce for sparse graphs. Let $c \triangleq
|\CCC^+|$. For $k = \mathcal{O}(1)$, the time complexity of
the brute force algorithm is $\mathcal{O}(c^k n^3)$. If $c = \mathcal{O}(n^2)$, this
complexity becomes $\mathcal{O}(n^{2k+3})$, which clearly is not scalable beyond
$k \geq 3$. Moreover, for $k = \alpha \cdot c$ ($\alpha < 1$) the time complexity of exhaustive
search becomes exponential in $c$. To address this problem, in the rest of this
section with propose two efficient approximation algorithms with performance
guarantees by exploiting the inherent structures of tree-connectivity.
\subsection{Greedy Algorithm}
For any $n \geq 2$, $w : \EE(K_n) \to \mathbb{R}_{>0}$, connected $(\VV,\EEi)$, and $\CCC^+ \subseteq \EE(K_n)$
define 
\begin{align}
  \varphi : 2^{\CCC^+} & \to \mathbb{R}_{\geq 0} \\
  \EE & \mapsto  \logTG(\EE)
  \label{}
\end{align}
Note that $\varphi$ is essentially $\logTG$ restricted to $\CCC^+$.
Therefore, Corollary~\ref{cor:add} readily follows from Theorem~\ref{th:logTG}.
\begin{corollary}
  $\varphi$ is normalized, monotone and submodular.
  \label{cor:add}
\end{corollary}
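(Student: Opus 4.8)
The plan is to observe that $\varphi$ is nothing but $\logTG$ with its ground set contracted from $\EE(K_n)$ down to $\CCC^+$, and to exploit the fact that normalization, monotonicity and submodularity are all inherited under restriction of a set function to a downward-closed subfamily of the form $2^{\CCC^+}$ with $\CCC^+ \subseteq \EE(K_n)$. The underlying reason, which I would state up front, is that each of the three defining conditions is a universally quantified family of (in)equalities; passing from $\logTG$ to $\varphi$ can only \emph{delete} members of that family --- it never introduces a constraint that was not already imposed on $\logTG$. Hence the properties guaranteed by Theorem~\ref{th:logTG} descend to $\varphi$ for free.

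Concretely, I would verify the three items in turn. For normalization, note $\varnothing \in 2^{\CCC^+}$ and $\varphi(\varnothing) = \logTG(\varnothing) = 0$ by the normalization half of Theorem~\ref{th:logTG}. For monotonicity, take any $\Acal \subseteq \Bcal \subseteq \CCC^+$; since $\CCC^+ \subseteq \EE(K_n)$ these are also nested subsets of $\EE(K_n)$, so $\varphi(\Bcal) = \logTG(\Bcal) \geq \logTG(\Acal) = \varphi(\Acal)$ directly from the monotonicity of $\logTG$. For submodularity, take $\Acal \subseteq \Bcal \subseteq \CCC^+$ and $s \in \CCC^+ \setminus \Bcal$. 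The key observation is that $s \in \CCC^+ \setminus \Bcal \subseteq \EE(K_n) \setminus \Bcal$, so $(\Acal,\Bcal,s)$ is a legitimate triple for the submodularity condition of $\logTG$; applying that condition verbatim gives
\begin{align}
  \varphi(\Acal \cup \{s\}) - \varphi(\Acal)
  &= \logTG(\Acal \cup \{s\}) - \logTG(\Acal) \nonumber \\
  &\geq \logTG(\Bcal \cup \{s\}) - \logTG(\Bcal) \nonumber \\
  &= \varphi(\Bcal \cup \{s\}) - \varphi(\Bcal), \nonumber
\end{align}
as required.

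There is essentially no obstacle here: the only point requiring care --- and the one I would flag explicitly --- is the containment $\CCC^+ \subseteq \EE(K_n)$, which guarantees that every set $\Acal$, $\Bcal$, $\Acal \cup \{s\}$, $\Bcal \cup \{s\}$ and every element $s$ appearing in the definitions for $\varphi$ genuinely lies in the domain (respectively ground set) of $\logTG$. Because of this, no inequality needs to be re-derived from the tree-connectivity machinery of Lemmas~\ref{th:add}--\ref{th:del}; the statement is a purely formal consequence of Theorem~\ref{th:logTG}, which is precisely why it is labelled a corollary.
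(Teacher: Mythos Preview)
Your proposal is correct and matches the paper's approach exactly: the paper simply notes that $\varphi$ is $\logTG$ restricted to $\CCC^+$ and states that the corollary ``readily follows from Theorem~\ref{th:logTG}.'' You have supplied the (entirely routine) verification that normalization, monotonicity, and submodularity survive restriction to a subset of the ground set, which is precisely the content the paper leaves implicit.
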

Consequently, $k$-\esp{}$^+$ can be expressed as the problem of maximizing a
normalized monotone submodular
function subject to a cardinality constraint, i.e.,
\begin{equation}
  \begin{aligned}
    & \underset{\EE \subseteq \CCC^+}{\text{maximize}}
    & & \varphi(\EE)\\
    & \text{subject to}
    & & |\EE| = k.
  \end{aligned}
  \label{eq:addEdgePhi}
\end{equation}
Maximizing an arbitrary monotone submodular function subject to a cardinality
constraint \emph{can be} NP-hard in general (see e.g., the Maximum Coverage
problem \cite{hochbaum1996approximation}). Therefore it is reasonable to look for
reliable approximation algorithms. 
In this section we study the greedy algorithm
described in Algorithm~\ref{alg:greed}.
Theorem~\ref{th:Nem} guarantees that Algorithm~\ref{alg:greed} is a constant-factor
approximation algorithm for $k$-\esp$^+$ with a factor of $(1-1/e) \approx
0.63$.
\begin{theorem}[\citet{nemhauser1978analysis}]
  The greedy algorithm attains at least $(1-1/e)f^\star$, where
  $f^\star$ is the maximum of any normalized monotone submodular function
  subject to a cardinality constraint.\footnote{A generalized version of Theorem~\ref{th:Nem} \cite{krause2012submodular}
  states  
  that after $\ell \geq k $ steps, the greedy algorithm is guaranteed to achieve
  at least $(1-e^{-\ell/k})f^\star_k$, where $f^\star_k$ is the
  maximum of $f(\Acal)$ subject to $|\Acal| = k$.}
  \label{th:Nem}
\end{theorem}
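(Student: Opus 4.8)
The plan is to follow the classical exchange argument of Nemhauser, Wolsey, and Fisher, which controls how fast the optimality gap shrinks at each greedy step. Write $f$ for the normalized monotone submodular objective on ground set $\mathcal{W}$, fix an optimal $k$-set $\mathcal{O}$ with $f(\mathcal{O}) = f^\star$, and let $\varnothing = \Acal_0 \subseteq \Acal_1 \subseteq \cdots \subseteq \Acal_k$ be the nested sets produced by the greedy rule, so that $\Acal_{i+1} = \Acal_i \cup \{e_{i+1}\}$ with $e_{i+1} \in \argmax_{e \in \mathcal{W} \setminus \Acal_i} \big(f(\Acal_i \cup \{e\}) - f(\Acal_i)\big)$.

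The crux is a \emph{submodular covering} inequality: for every $i$,
\[
  f^\star \;\leq\; f(\Acal_i) + \sum_{e \in \mathcal{O} \setminus \Acal_i} \big(f(\Acal_i \cup \{e\}) - f(\Acal_i)\big).
\]
I would establish this in two moves. First, monotonicity gives $f^\star = f(\mathcal{O}) \leq f(\mathcal{O} \cup \Acal_i)$. Second, enumerating $\mathcal{O} \setminus \Acal_i = \{o_1, \dots, o_r\}$ and telescoping $f(\mathcal{O} \cup \Acal_i) - f(\Acal_i)$ along the chain $\Acal_i \subseteq \Acal_i \cup \{o_1\} \subseteq \cdots \subseteq \Acal_i \cup \mathcal{O}$, each telescoped increment is bounded above by the single-element marginal $f(\Acal_i \cup \{o_j\}) - f(\Acal_i)$ by submodularity (diminishing returns applied to the nested prefixes). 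Summing yields the displayed bound.

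With this inequality in hand the remainder is a short recursion. Since $|\mathcal{O} \setminus \Acal_i| \leq k$ and each marginal in the sum is at most the greedy gain $f(\Acal_{i+1}) - f(\Acal_i)$ by the maximality of $e_{i+1}$, the covering inequality gives $f^\star - f(\Acal_i) \leq k\big(f(\Acal_{i+1}) - f(\Acal_i)\big)$. Writing $\delta_i \triangleq f^\star - f(\Acal_i)$ for the optimality gap, this rearranges to $\delta_{i+1} \leq (1 - 1/k)\,\delta_i$, and since $\delta_0 = f^\star$ (using normalization $f(\varnothing)=0$), induction yields $\delta_k \leq (1-1/k)^k f^\star \leq e^{-1} f^\star$. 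Hence $f(\Acal_k) = f^\star - \delta_k \geq (1 - 1/e) f^\star$, as claimed. The generalized bound in the footnote follows by running the same recursion for $\ell \geq k$ steps against the optimal $k$-set, giving $\delta_\ell \leq (1-1/k)^\ell f^\star_k \leq e^{-\ell/k} f^\star_k$.

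I expect the only genuinely delicate step to be the covering inequality: one must apply submodularity to the \emph{correct} chain of nested prefixes so that each exchange increment is dominated by a marginal taken over the \emph{current} greedy set $\Acal_i$ rather than over a larger set, and one must invoke monotonicity exactly once to pass from $f^\star$ to $f(\mathcal{O}\cup\Acal_i)$. Everything downstream is the elementary geometric recursion on $\delta_i$.
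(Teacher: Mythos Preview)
Your argument is correct and is precisely the classical Nemhauser--Wolsey--Fisher exchange proof. The paper does not supply its own proof of this theorem; it is stated as a cited result from \cite{nemhauser1978analysis}, so there is nothing to compare against beyond noting that your write-up reproduces the original argument faithfully.
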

\begin{algorithm}[!tb]
  \caption{Greedy Edge Selection}
  \label{alg:greed}
  \begin{algorithmic}[1]
    \Function{{GreedyESP}}{$\LL_\text{init},\CCC^+,k$}
      \State $\EE \gets \varnothing$
      \State $\LL \gets \LL_\text{init}$
      \State $\mathbf{C} \gets$\textrm{Cholesky}({$\LL$})
      \While{$ |\EE| < k$}
      \State $e_{uv}^\star$ $\gets$
      \textrm{BestEdge}($\CCC^+ \setminus \EE,\KK$)
	\State $\EE \gets \EE \cup \{e^\star\}$
	\State $\aaa_{uv} \gets \ee_u - \ee_v$
	\State $\LL \gets \LL +
	w(e_{uv}^{\star})\aaa_{uv}\aaa_{uv}^{\top}$
      \State $\KK \gets
      \textrm{CholeskyUpdate}(\mathbf{C},\sqrt{w(e_{uv}^\star)}\aaa_{uv})$\Comment{{Rank-one
      update}}
      \EndWhile
      \State \Return $\EE$
    \EndFunction
    \algrule
    \Function{{BestEdge}}{$\CCC,\KK$}
    \State ${m} \gets 0$\Comment{Maximum value}
    \ForAll{$e \in \CCC$}\Comment{Parallelizable loop} \label{line:par}
    \State $w_e \gets w(e)$
    \State ${\Delta_e} \gets \textrm{Reff}(e,\KK)$
    \If{${w_e\Delta_e} > {m}$}
    \State $e^\star \gets e$
    \State $m \gets w_e\Delta_e$
    \EndIf
    \EndFor
    \State \Return $e^\star$
    \EndFunction
    \algrule
    \Function{\textrm{Reff}}{$e_{uv},\KK$}\Comment{Effective Resistance}
    \State $\aaa_{uv} \gets \ee_u - \ee_v$
    \State {\color{black!60!white}\texttt{//} solve $\KK \mathbf{x}_{uv} =
    \aaa_{uv}$}
    \State $\mathbf{x}_{uv} \gets$ \textrm{ForwardSolver}($\KK,\aaa_{uv}$)\Comment{Lower Triangular}
    \State $\Delta_{uv} \gets \|\xx_{uv}\|^2$
      \State \Return $\Delta_{uv}$
    \EndFunction
  \end{algorithmic}
\end{algorithm}

\begin{remark}
  Recall that $\varphi$ is normalized by $\mathsf{log\,tree}_{n,w}(\EEi)$, and
  therefore reflects the \emph{tree-connectivity gain} achieved by adding
  $k$ new edges to the original graph $(\VV,\EEi,w)$. In order to avoid any
  confusion, from now on we denote the optimum value of \eqref{eq:addEdgePhi} by
  $\mathsf{OPT}^{\varphi}$, and use $\mathsf{OPT}$ to refer to the maximum achievable tree-connectivity
  in $k$-\esp$^+$. Note that,
  \begin{equation}
    \mathsf{OPT}^\varphi = \mathsf{OPT} - \mathsf{log\,tree}_{n,w}(\EEi).
    \label{}
  \end{equation}
  Let $\EE_\text{greedy}$ be the set of edges picked by
  Algorithm~\ref{alg:greed}. Define $\varphi_\text{greedy} \triangleq
  \varphi(\EE_{\text{greedy}})$. Then, according to Theorem~\ref{th:Nem},
  $\varphi_{\text{greedy}} \geq (1-1/e)\,\OPT^\varphi$ and therefore,
  \begin{equation}
    \mathsf{log\,tree}_{n,w}(\EE_{\text{greedy}} \cup \EEi) \geq (1-1/e)\,\OPT +
    1/e\,\mathsf{log\,tree}_{n,w}(\EEi).
    \label{eq:boundNem}
  \end{equation}
\end{remark}

Algorithm~\ref{alg:greed} starts with an empty set of edges, and in each round picks the
edge that maximizes the weighted tree-connectivity of the graph, until the
cardinality requirement is met. Hence now we need a procedure for
finding the
edge that maximizes the weighted tree-connectivity.
An efficient strategy is to use Lemma~\ref{th:add} and
pick the edge with the highest effective resistance $w_{uv} \Delta_{uv}$.  To
compute $\Delta_{uv} = \aaa_{uv}^\top \LL^{-1} \aaa_{uv}$, we first compute the
 Cholesky factor of the reduced weighted Laplacian matrix of the current
graph $\LL = \KK \KK^\top$. Next, we note that $\Delta_{uv} = \|\xx_{uv}\|^2$ where
$\xx_{uv}$ is the solution of the triangular system $\KK \xx_{uv} = \aaa_{uv}$.
$\xx_{uv}$ can be computed by forward substitution in $\mathcal{O}(n^2)$ time. The time
complexity of each round is dominated by the $\mathcal{O}(n^3)$ time required for computing
the Cholesky factor $\KK$.
In the $i$th round, Algorithm~\ref{alg:greed} has to compute $c - i$
effective resistances where $c = |\CCC^+|$. For $k = \alpha \cdot c$ ($\alpha <
1$), evaluating effective resistances takes
$\mathcal{O}(c^2\,n^2)$ time. If $k = \mathcal{O}(1)$, this time reduces to $\mathcal{O}(c\,n^2)$.
Also, note that upon computing the Cholesky
factor once in each round, $\xx_{uv}$'s can be computed in parallel by
solving $\KK \xx_{uv} = \aaa_{uv}$ for different values of $\aaa_{uv}$ (see line~\#\ref{line:par} in
Algorithm~\ref{alg:greed}). 
We can avoid the $\mathcal{O}(k\,n^3)$ time spent on repetitive Cholesky
factorization by factorizing $\LL_\text{init}$ once, followed by
$k-1$ rank-one updates, each of which takes $\mathcal{O}(n^2)$ time.
Therefore, the total time complexity of
Algorithm~\ref{alg:greed} for $k = \mathcal{O}(1)$ and $k = \alpha\cdot c$ will be $\mathcal{O}(n^3 + c\,n^2)$ and $\mathcal{O}(n^3 +
c^2\,n^2)$, respectively. In the worst case of $\CCC^+ = \EE(K_n)$, $c = \mathcal{O}(n^2)$ and therefore we get
$\mathcal{O}(n^4)$ and $\mathcal{O}(n^6)$, respectively, for $k =
\mathcal{O}(1)$ and $k = \alpha \cdot c$.
Finally, note that for sparse graphs this complexity
drops significantly given a sufficiently good fill-reducing permutation for the
reduced weighted graph Laplacian.

\subsection{Convex Relaxation}
Now we take a different approach and design an efficient approximation algorithm
for $k$-\esp{}$^+$ by means of convex relaxation. We begin by assigning an
auxiliary variable $0 \leq \pi_i \leq 1$ to each candidate edge $e_i \in \CCC^+$.  Let $\ppp
\triangleq [\pi_1\,\pi_2\,\dots\,\pi_c]^\top$ be the stacked vector of auxiliary variables
in which $c = |\CCC^+|$.
Let $\GG= (\VV,\EEi,w)$ be the given base graph. Define
\begin{equation}
  \LL(\ppp;\GG,\CCC^+) \triangleq \sum_{\mathclap{e_i \in \EEi}} \LL_{e_i} +
  \sum_{\mathclap{e_j \in \CCC^+}} \pi_{j} \LL_{e_j} = \AAA
  \WW^\prime\hspace{-0.09cm} \AAA^\top,
  \label{eq:Lpi}
\end{equation}
where $\LL_{e_k}$ is the corresponding reduced elementary weighted
Laplacian, $\AAA$ is the reduced incidence matrix of $(\VV,\EEi \cup \CCC^+)$, and $\WW^\prime \triangleq
\diag(w^\prime(e_1),\dots,w^\prime(e_{s}))$ in which $s \triangleq |\EEi| +
|\CCC^+|$ and,
\begin{equation}
  w^\prime(e_i) \triangleq
  \begin{cases}
    \pi_i w(e_i) & e_i \in \CCC^+, \\
    w(e_i) & e_i \notin \CCC^+.
  \end{cases}
  \label{}
\end{equation}

\begin{lemma}
  $\LL(\ppp)$ is positive definite iff
  $(\VV,\EEi \cup \CCC^+)$ is connected.
  \label{lm:connected}
\end{lemma}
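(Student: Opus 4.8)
The plan is to recognize $\LL(\ppp)$ for what it is: the reduced weighted Laplacian of the graph $\GG^\prime \triangleq (\VV,\EEi\cup\CCC^+)$ equipped with the nonnegative weight function $w^\prime$, anchored at the same vertex that defines $\AAA$. Because $\LL(\ppp)=\AAA\WW^\prime\AAA^\top$ with $\WW^\prime\succeq 0$ (each $w^\prime(e)\ge 0$, since $w>0$ and $\pi_i\in[0,1]$), the matrix is positive semidefinite for every feasible $\ppp$. The entire question is therefore when it is \emph{strictly} positive definite, i.e.\ when its null space is trivial, and I would settle this by a direct null-space analysis and then cross-check it against the weighted Matrix-Tree Theorem.

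For the direct argument I would expand the quadratic form edge by edge: for any $\xx\in\mathbb{R}^{n-1}$,
\begin{equation}
  \xx^\top \LL(\ppp)\,\xx \;=\; \sum_{e} w^\prime(e)\,(\aaa_e^\top\xx)^2 \;\ge\; 0,
\end{equation}
so $\xx$ lies in the null space iff $\aaa_e^\top\xx = 0$ for every edge $e$ with $w^\prime(e)>0$. Extending $\xx$ by the value $0$ on the anchored vertex turns it into a vertex potential $\hat\xx$ on all of $\VV$, and $\aaa_e^\top\xx = 0$ says exactly that $\hat\xx$ is constant across edge $e$. Hence a nonzero null vector exists iff there is a non-constant potential that is constant across every positively weighted edge, which by a standard connectivity argument occurs iff the subgraph formed by those edges fails to connect all of $\VV$ to the anchor. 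When that subgraph is connected and spanning, $\hat\xx$ must be globally constant and, being $0$ at the anchor, identically zero, forcing $\xx=\zero$; this is precisely the $\LL(\ppp)\succ 0$ case.

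As a shorter second route I would invoke Theorem~\ref{th:wMT}, which gives $\det\LL(\ppp)=t_{w^\prime}(\GG^\prime)$. Combined with $\LL(\ppp)\succeq 0$, positive definiteness is equivalent to $\det\LL(\ppp)>0$, i.e.\ to $t_{w^\prime}(\GG^\prime)>0$. Since $t_{w^\prime}$ is a sum of products of edge weights over spanning trees, it is strictly positive exactly when at least one spanning tree survives with positive weight, reproducing the connectivity condition obtained from the null-space computation and corroborating it.

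The point that needs the most care, and the main obstacle, is the status of candidate edges with $\pi_i=0$: these carry weight $w^\prime(e)=0$ and drop out of both the quadratic form and every surviving spanning-tree term, so strictly speaking the analysis certifies connectivity of the \emph{positively weighted} subgraph rather than of $(\VV,\EEi\cup\CCC^+)$ verbatim (a bridge assigned $\pi_i=0$ would break positive definiteness even when the full candidate graph is connected). I would resolve this by making the scope explicit: once we restrict attention to $\ppp$ with strictly positive entries, equivalently to the regime in which $\log\det\LL(\ppp)$ is the finite objective of the relaxation, the positively weighted support coincides with $(\VV,\EEi\cup\CCC^+)$ and the stated equivalence is exact. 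Stating this restriction is what keeps the ``iff'' clean and is the step I would be most careful to phrase precisely.
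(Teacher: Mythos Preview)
The paper does not actually supply a proof for this lemma; it is stated without justification, presumably because the preliminaries already record that the reduced Laplacian $\LL=\AAA\AAA^\top$ is positive definite iff the underlying graph is connected, and the lemma is the evident weighted analogue applied to $(\VV,\EEi\cup\CCC^+)$ with weights $w^\prime$. Your null-space computation is precisely the standard verification of that fact and is correct, and the cross-check via Theorem~\ref{th:wMT} is a clean redundancy.

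Your observation about boundary $\pi_i=0$ is sharper than the paper's own statement: as written, the lemma does not quantify over $\ppp$, yet the equivalence can fail at a $\ppp$ that zeroes out a bridge in $\CCC^+$. Your resolution---restricting to strictly positive $\ppp$, i.e.\ the regime where $\log\det\LL(\ppp)$ is the finite objective of the relaxation---is the right one. In the paper's primary setting the base graph $(\VV,\EEi)$ is assumed connected, so $\LL(\ppp)\succeq\LL_{\text{init}}\succ 0$ for every $\ppp\in[0,1]^c$ and the subtlety never bites; but in the generality of Remark~\ref{th:fullEsp} (where $\EEi=\varnothing$) your caveat is genuinely needed for the ``iff'' to hold.
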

Note that every $k$-subset of $\CCC^+$ is optimal for
$k$-\esp{}$^+$ if $(\VV,\EEi \cup
\CCC^+)$ is not connected. Therefore, if we ignore this degenerate case, we can
safely assume that
$\LL(\ppp;\GG,\CCC^+)$ is positive definite. With a slight abuse of
notation, from now
on we drop the parameters from $\LL(\ppp;\GG,\CCC^+)$ and use $\LL(\ppp)$ whenever $\GG$ and $\CCC^+$
are clear from the context. 
Now consider the following optimization problem over $\ppp$.
\begin{equation}
  \begin{aligned}
    & \underset{\ppp}{\text{maximize}}
    & & \log\det {\LL(\ppp)}\\
    & \text{subject to}
    && \|\ppp\|_0 = k,\\
    &&& 0\leq \pi_i \leq {1}, \, \forall i \in [c].
  \end{aligned}
  \label{eq:conv0}
  \tag{P$_1$}
\end{equation}
\ref{eq:conv0} is equivalent to our former definition of
$k$-\esp{}$^+$. The auxiliary variables act as selectors: the $i$th candidate edge is
selected iff $\pi_i = 1$.
The objective function rewards strong weighted tree-connectivity. The
combinatorial difficulty of \esp{} here is embodied in the
non-convex $\ell_0$-norm constraint. 
 It is easy to see that at the optimal solution, auxiliary variables
take binary values.
Therefore \ref{eq:conv0} can also be expressed as
\begin{equation}
  \begin{aligned}
    & \underset{\ppp}{\text{maximize}}
    & & \log\det {\LL(\ppp)}\\
    & \text{subject to}
    && \|\ppp\|_1 = k,\\
    &&& \pi_i \in \{0,1\}, \, \forall i \in [c].
  \end{aligned}
  \label{eq:conv0bin}
  \tag{P$^\prime_1$}
\end{equation}
A natural choice for relaxing \ref{eq:conv0bin} is to replace $\pi_i \in \{0,1\}$
with $0 \leq \pi_i \leq 1$, i.e.,
\begin{equation}
  \begin{aligned}
    & \underset{\ppp}{\text{maximize}}
    & & \log\det {\LL(\ppp)}\\
    & \text{subject to}
    && \|\ppp\|_1 = k,\\
    &&& 0\leq \pi_i \leq {1}, \, \forall i \in [c].
  \end{aligned}
  \label{eq:conv1}
  \tag{P$_2$}
\end{equation}
The feasible set of \ref{eq:conv1} contains that of \ref{eq:conv0} (or,
equivalently, \ref{eq:conv0bin}), and therefore the optimum value of \ref{eq:conv1} is
an upper bound for the optimum of \ref{eq:conv0} (or, equivalently,
\ref{eq:conv0bin}).
Note that the $\ell_1$-norm here is identical to $\sum_{i=1}^{c} \pi_i$.
\ref{eq:conv1} is a convex optimization problem since the objective function
(tree-connectivity) is concave and the constraints are linear and affine in $\ppp$.
In fact, \ref{eq:conv1} is an instance of the $\mathrm{MAXDET}$ problem
\cite{vandenberghe1998determinant} subject to additional affine 
constraints on $\ppp$.
It is worth noting that \ref{eq:conv1} can be reached also by relaxing the non-convex $\ell_0$-norm
constraint in \ref{eq:conv0} by a convex $\ell_1$-norm constraint (essentially $\sum_{i=1}^c \pi_i = k$).
Furthermore, \ref{eq:conv1} is also closely related to a $\ell_1$-regularalized
instance of $\mathrm{MAXDET}$,
\begin{equation}
  \begin{aligned}
    & \underset{\ppp}{\text{maximize}}
    & & \log\det {\LL(\ppp)} - \lambda \, \|\ppp\|_1\\
    & \text{subject to}
    && 0\leq \pi_i \leq {1}, \, \forall i \in [c].
  \end{aligned}
  \label{eq:conv1b}
  \tag{P$_3$}
\end{equation}
This problem is a penalized form of \ref{eq:conv1}; these two problems
are equivalent for some positive value of $\lambda$.
Problem~\ref{eq:conv1b} is also a convex optimization problem for non-negative
$\lambda$.
The $\ell_1$-norm in \ref{eq:conv1b} encourages sparser $\ppp$, while the
log-determinant rewards stronger tree-connectivity.  The penalty coefficient
$\lambda$ is a parameter that specifies the desired degree of sparsity, i.e.,
larger $\lambda$ yields a sparser vector of selectors $\ppp$.  

Problem~\ref{eq:conv1} (and \ref{eq:conv1b}) can be solved efficiently using
interior-point methods \cite{Boyd2004}.  After finding a globally optimal
solution $\ppp^\star$ for the relaxed problem \ref{eq:conv1}, we ultimately need
to map it into a feasible $\ppp$ for \ref{eq:conv0}, i.e., picking $k$ edges
from the candidate set $\CCC^+$. First note that if $\ppp^\star \in \{0,1\}^c$,
it means that $\ppp^\star$ is already an optimal solution for $k$-\esp{}$^+$
and \ref{eq:conv0}.  However, in the more likely case of $\ppp^\star$ containing
fractional values, we need a \emph{rounding procedure} to set $k$ auxiliary
variables to one and others to zero. The most intuitive choice is to pick
the $k$ edges with the largest $\pi^\star_i$'s. Another (approximate) rounding
strategy (and a justification for picking the $k$ largest $\pi^\star_i$) emerges
from interpreting $\pi_i$ as the probability of selecting the $i$th candidate
edge.  Theorem~\ref{th:cvx} provides a new interesting way of interpreting the
convex relaxation of \ref{eq:conv0} by \ref{eq:conv1}.
\begin{theorem}
  \label{th:cvx}
  Define $\EE_\bullet \triangleq \EEi \cup \CCC^+$ and $\GG_{\bullet} \triangleq
  (\VV,\EE_\bullet,w)$. Let $\ppp_\bullet =
  [\pi_1 \,\dots\,\pi_{s}]^\top \in (0,1]^s$ such that $s \triangleq |\EEi| +
  |\CCC^+|$ and $\pi_i = 1$ if $e_i \in \EEi$. Then we have
  \begin{align}
    {\mathbb{E}}_{{\mathcal{H} \sim
    \mathbb{G}(\GG_\bullet,\ppp_\bullet)}} &
    \big[t_w(\mathcal{H})\big]  = \det 
    {\LL(\ppp)} \label{eq:i}, \\
    {\mathbb{E}}_{{\mathcal{H} \sim
  \mathbb{G}(\GG_\bullet,\ppp_\bullet)}}&
  \big[|\EE(\mathcal{H})| -  |\EEi|\big] =
  \sum_{\mathclap{e_i \in \CCC^+}} \pi_i = \|\ppp\|_1.
    \label{eq:ii}
  \end{align}
\end{theorem}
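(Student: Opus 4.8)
The plan is to establish each of the two identities \eqref{eq:i} and \eqref{eq:ii} separately, with the first being a direct application of Theorem~\ref{th:expected} and the second a routine expectation-of-a-sum calculation. For \eqref{eq:i}, I would start from the observation that the random graph $\mathcal{H} \sim \mathbb{G}(\GG_\bullet,\ppp_\bullet)$ is exactly the setup of Theorem~\ref{th:expected}, with the base graph $\GG^\circ = \GG_\bullet = (\VV,\EE_\bullet,w)$ and the probability vector $\ppp_\bullet$. Applying that theorem gives
\begin{equation}
  {\mathbb{E}}_{{\mathcal{H} \sim \mathbb{G}(\GG_\bullet,\ppp_\bullet)}}\big[t_w(\mathcal{H})\big] = t_{w_p}(\GG_\bullet),
  \label{}
\end{equation}
where $w_p(e_i) = \pi_i\, w(e_i)$. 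The remaining task is to identify $t_{w_p}(\GG_\bullet)$ with $\det\LL(\ppp)$. By the weighted matrix-tree theorem (Theorem~\ref{th:wMT}), $t_{w_p}(\GG_\bullet) = \det \AAA \WW_p \AAA^\top$ where $\WW_p = \diag(\pi_i w(e_i))$ ranges over all edges of $\GG_\bullet$. Comparing this against the definition \eqref{eq:Lpi} of $\LL(\ppp) = \AAA \WW^\prime \AAA^\top$, I would check that the two weighted Laplacians coincide: for $e_i \in \EEi$ we have $\pi_i = 1$ by hypothesis (so $w_p(e_i) = w(e_i) = w^\prime(e_i)$), and for $e_i \in \CCC^+$ both reduce to $\pi_i w(e_i)$. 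Hence $\WW_p = \WW^\prime$ and the two determinants agree, proving \eqref{eq:i}.

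For \eqref{eq:ii}, I would write $|\EE(\mathcal{H})| = \sum_{e_i \in \EE_\bullet} \mathbbm{1}[e_i \in \EE(\mathcal{H})]$ as a sum of edge-indicator random variables, so that by linearity of expectation and the independence-per-edge structure in Definition~\ref{def:rand},
\begin{equation}
  {\mathbb{E}}\big[|\EE(\mathcal{H})|\big] = \sum_{\mathclap{e_i \in \EE_\bullet}} \Pr[e_i \in \EE(\mathcal{H})] = \sum_{\mathclap{e_i \in \EE_\bullet}} \pi_i.
  \label{}
\end{equation}
Splitting the sum over $\EE_\bullet = \EEi \cup \CCC^+$ and using $\pi_i = 1$ for $e_i \in \EEi$, the contribution of the base edges is exactly $|\EEi|$, which cancels against the $-|\EEi|$ on the left-hand side; what remains is $\sum_{e_i \in \CCC^+} \pi_i$, and since all these $\pi_i$ are non-negative this equals $\|\ppp\|_1$. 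This settles \eqref{eq:ii}.

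I do not anticipate a serious obstacle here, as both identities follow from results already in hand; the only point requiring care is the bookkeeping that aligns the indexing conventions. Specifically, $\ppp$ as used in $\LL(\ppp)$ is the $c$-vector of auxiliary variables over $\CCC^+$ only, whereas $\ppp_\bullet$ is the augmented $s$-vector over all of $\EE_\bullet$ with the base-edge entries pinned to one; I would make explicit that these two encode the same weighted Laplacian so that the notation $\LL(\ppp)$ on the right-hand side of \eqref{eq:i} is unambiguous. The mild subtlety is thus purely notational—ensuring the reader sees that the pinning $\pi_i = 1$ for $e_i \in \EEi$ is precisely what makes Theorem~\ref{th:expected} applicable with $\GG_\bullet$ as the base graph rather than $\GG_\text{init}$, and that this same pinning is what produces the clean cancellation in \eqref{eq:ii}.
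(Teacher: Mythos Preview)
Your proposal is correct and follows essentially the same route as the paper's own proof: equation~\eqref{eq:i} is obtained by applying Theorem~\ref{th:expected} and recognizing $\LL(\ppp)$ as the reduced weighted Laplacian of $\GG_\bullet$ with weights scaled by $\ppp_\bullet$, and equation~\eqref{eq:ii} by writing $|\EE(\mathcal{H})|$ as a sum of Bernoulli indicators and using linearity of expectation. Your treatment is in fact slightly more explicit than the paper's in spelling out the identification $\WW_p = \WW^\prime$ and the indexing distinction between $\ppp$ and $\ppp_\bullet$, but the underlying argument is the same.
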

Note that \eqref{eq:i} and \eqref{eq:ii} appear in the objective function and
the constraints of
\ref{eq:conv1}, respectively. Thus \ref{eq:conv1} can be rewritten as
\begin{equation}
  \begin{aligned}
    & \underset{\ppp}{\text{maximize}}
    & & {\mathbb{E}}_{{\mathcal{H} \sim
    \mathbb{G}(\GG_\bullet,\ppp_\bullet)}}
    \big[t_w(\mathcal{H})\big] \\
    & \text{subject to}
    && {\mathbb{E}}_{{\mathcal{H} \sim
  \mathbb{G}(\GG_\bullet,\ppp_\bullet)}}
  \big[|\EE(\mathcal{H})| \big] = k + |\EEi|,\\
    &&& 0\leq \pi_i \leq {1}, \, \forall i \in [s].
  \end{aligned}
  \label{eq:conv1prime}
  \tag{P$_2^\prime$}
\end{equation}
This offers a new narrative: the objective in \ref{eq:conv1} is to find
the optimal probabilities $\ppp^\star$ for sampling edges from $\CCC^+$ such that the
weighted number of spanning trees is maximized in \emph{expectation}, while the
\emph{expected} number of newly selected edges is equal to $k$.
In other words, \ref{eq:conv1} can be seen as a convex relaxation of
\ref{eq:conv0} at the expense of maximizing the objective and satisfying
the
constraint, both \emph{in expectation}.
This new interpretation motivates an approximate randomized
rounding procedure that picks $e_i \in \CCC^+$ with probability $\pi_i^\star$.
According to Theorem~\ref{th:cvx}, this randomized rounding scheme, in average, attains $\det\LL(\ppp^\star)$ by
picking $k$ new edges in average.
\begin{theorem}
For any $0 < \epsilon < 1$
and $\delta > 0$,
  \begin{align}
    \mathbb{P}\,\big[|\EE^\star| < (1-\epsilon)k\big] & <
    \exp\left(-\epsilon^2k/2\right), \\ 
    \mathbb{P}\,\big[|\EE^\star| > (1+\delta)k\big] & <
    \exp\left(-\delta^2k/3\right),
    \label{}
  \end{align}
  where $\EE^\star$ is the set of selected edges by the randomized rounding
  scheme defined above.
  \label{th:Cher}
\end{theorem}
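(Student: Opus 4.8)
The plan is to recognize that the randomized rounding scheme turns $|\EE^\star|$ into a sum of independent indicator variables, so that the two claimed estimates are nothing but the two tails of the standard multiplicative Chernoff bound. Concretely, for each candidate edge $e_i \in \CCC^+$ let $X_i$ be the indicator of the event that $e_i$ is selected. By construction of the rounding procedure $X_i \sim \mathrm{Bernoulli}(\pi_i^\star)$, and the $X_i$ are mutually independent, exactly as in the random-graph model $\mathbb{G}(\GG_\bullet,\ppp_\bullet)$ of Definition~\ref{def:rand}. Since only candidate edges are randomized (those in $\EEi$ are always present), the number of newly selected edges is the sum $|\EE^\star| = \sum_{i=1}^{c} X_i$ of independent $\{0,1\}$-valued random variables.

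The first step is to pin down the mean. By Theorem~\ref{th:cvx}, equation~\eqref{eq:ii}, the expected number of newly selected edges equals $\|\ppp^\star\|_1$, and because $\ppp^\star$ is feasible for \ref{eq:conv1}, the constraint $\|\ppp\|_1 = k$ forces $\mu \triangleq \mathbb{E}[|\EE^\star|] = \sum_{i=1}^{c} \pi_i^\star = k$. This identity is the crucial ingredient, since it lets us replace the generic mean $\mu$ by $k$ in the exponents of the tail bounds. With $\mu = k$ in hand, the second step is a direct application of the two-sided multiplicative Chernoff inequality: the lower-tail bound gives $\mathbb{P}[|\EE^\star| < (1-\epsilon)\mu] < \exp(-\mu\epsilon^2/2)$ for $0 < \epsilon < 1$, and the upper-tail bound gives $\mathbb{P}[|\EE^\star| > (1+\delta)\mu] < \exp(-\mu\delta^2/3)$; substituting $\mu = k$ yields precisely the two stated inequalities. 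The derivation of each bound follows the usual Bernstein--Chernoff route, which I would either cite or sketch in one line: apply Markov's inequality to $\exp(t|\EE^\star|)$, use independence to factor the moment generating function as $\prod_i \mathbb{E}[e^{tX_i}] = \prod_i \bigl(1 + \pi_i^\star(e^t-1)\bigr) \leq \exp\bigl((e^t-1)\mu\bigr)$, and optimize over $t$.

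There is no substantive obstacle here once independence and the value of the mean are established; the statement is an off-the-shelf concentration result, and the only genuine content is the observation, via Theorem~\ref{th:cvx}, that the expected newly selected edge count is exactly $k$. The single point that warrants care is the range of validity of the upper-tail exponent: the clean form $\exp(-\mu\delta^2/3)$ is the textbook bound for $0 < \delta \leq 1$, while for larger $\delta$ one invokes the sharper $\mathbb{P}[|\EE^\star| > (1+\delta)\mu] \leq \bigl(e^\delta/(1+\delta)^{1+\delta}\bigr)^{\mu}$, which dominates it. I would therefore present the generic moment-generating-function bound and specialize it to obtain both tails, noting the applicable range of $\delta$.
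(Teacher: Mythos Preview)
Your proposal is correct and matches the paper's approach: the paper simply states that the theorem is a direct application of Chernoff bounds for Poisson trials arising from independently sampling edges in $\CCC^+$ with probabilities $\ppp^\star$. Your additional care in identifying $\mu = k$ via the feasibility constraint of \ref{eq:conv1} and in delimiting the range of $\delta$ only makes the argument more explicit than the paper's one-line proof.
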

Theorem~\ref{th:Cher} ensures that the probability of the events in which the
aforementioned randomized rounding
strategy picks too many/few edges (compared to $k$) decay exponentially.
Note that this new narrative offers another intuitive justification for deterministically picking the $k$ edges
with largest $\pi_i^\star$'s.
Finally, we believe that Theorems~\ref{th:cvx} and \ref{th:Cher} can potentially
be used as building blocks to design new randomized rounding schemes.

\subsection{Certifying Near-Optimality}
The proposed approximation algorithms also provide \emph{a posteriori} lower and
upper bounds for the maximum achievable tree-connectivity in \esp{}. Let
$\EE_\text{greedy}$, $\EE_\text{cvx}$  be the solutions returned by the greedy
and convex\footnote{Picking the $k$ edges with the largest
  $\pi^\star_i$'s from the solution of \ref{eq:conv1}.} approximation algorithms, respectively. Let $\tau^\star_\text{cvx}$ be the
  optimum value of \ref{eq:conv1} and define $\tau_\text{init} \triangleq
  \mathsf{log\,tree}_{n,w}(\EEi)$, $\tau_\text{cvx} \triangleq \mathsf{log\,tree}_{n,w}(\EE_\text{cvx} \cup
  \EEi)$ and $\tau_\text{greedy}  \triangleq \mathsf{log\,tree}_{n,w}(\EE_\text{greedy} \cup
  \EEi)$. 
\begin{corollary}
  \label{cor:bound}
\begin{equation}
  \normalfont
  \max\,\Big\{ \tau_\text{greedy},\tau_\text{cvx} \Big\}
  \leq
  \OPT
  \leq
  \min\,\Big\{ \zeta\tau_\text{greedy} +
  (1-\zeta)\tau_\text{init},\tau^\star_\text{cvx} \Big\}
  \label{eq:optbound}
\end{equation}
  where $\zeta \triangleq {(1-1/e)}^{-1} \approx 1.58$.\footnote{Furthermore, recall that the
leftmost term in \eqref{eq:optbound} is bounded from below by the expression given in
\eqref{eq:boundNem}.} 
\end{corollary}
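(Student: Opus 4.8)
The plan is to prove the four inequalities bundled in \eqref{eq:optbound} one at a time, since each is an immediate consequence of a result already established: the two lower bounds follow from feasibility alone, the greedy upper bound from the Nemhauser guarantee, and the convex upper bound from the relaxation argument. First I would dispose of the lower bound $\max\{\tau_\text{greedy},\tau_\text{cvx}\}\leq\OPT$. By construction both $\EE_\text{greedy}$ and $\EE_\text{cvx}$ are $k$-subsets of $\CCC^+$ (the greedy loop terminates at $|\EE|=k$, and the rounding step selects the $k$ largest $\pi_i^\star$), hence both are feasible points of $k$-\esp{}$^+$. Since $\OPT$ is by definition the maximum of $\mathsf{log\,tree}_{n,w}(\EE\cup\EEi)$ over all feasible $\EE$, the values $\tau_\text{greedy}$ and $\tau_\text{cvx}$ attained at these particular feasible points cannot exceed it, which settles both lower bounds simultaneously.

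Next I would derive the greedy upper bound $\OPT \leq \zeta\tau_\text{greedy}+(1-\zeta)\tau_\text{init}$. Because $\varphi$ is normalized, monotone and submodular (Corollary~\ref{cor:add}), Theorem~\ref{th:Nem} applies and gives $\varphi_\text{greedy}\geq(1-1/e)\,\OPT^\varphi$. Substituting the two normalization identities $\varphi_\text{greedy}=\tau_\text{greedy}-\tau_\text{init}$ and $\OPT^\varphi=\OPT-\tau_\text{init}$ turns this into $\tau_\text{greedy}-\tau_\text{init}\geq(1-1/e)(\OPT-\tau_\text{init})$; dividing by $1-1/e>0$ and collecting terms yields the stated bound, where the identity $\zeta(1-1/e)=1$ is what produces the coefficient $(1-\zeta)$ in front of $\tau_\text{init}$. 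This is exactly the rearrangement already recorded in \eqref{eq:boundNem}.

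Finally, the convex upper bound $\OPT\leq\tau^\star_\text{cvx}$ follows from the containment of feasible sets invoked when \ref{eq:conv1} was introduced. Any binary selector $\ppp$ encoding a feasible edge set $\EE$ for $k$-\esp{}$^+$ is feasible for \ref{eq:conv1}, and at such a point the objective $\log\det\LL(\ppp)$ equals $\mathsf{log\,tree}_{n,w}(\EE\cup\EEi)$ by the weighted matrix-tree theorem (Theorem~\ref{th:wMT}); thus the feasible set of \ref{eq:conv1} contains the image of that of $k$-\esp{}$^+$, so its optimum $\tau^\star_\text{cvx}$ dominates $\OPT$. Taking the maximum of the two lower bounds and the minimum of the two upper bounds then assembles \eqref{eq:optbound}. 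I do not expect a genuine obstacle here, as the corollary is an assembly of prior results; the only step requiring care is the algebraic bookkeeping in the greedy bound, namely keeping the normalization by $\tau_\text{init}$ consistent so that the factor $\zeta$ and its complement $(1-\zeta)$ emerge in the correct places.
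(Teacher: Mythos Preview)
Your proposal is correct and matches the paper's treatment: the corollary is not given a standalone proof in the paper but is presented as an immediate assembly of the feasibility of the greedy and rounded-convex solutions, the rearranged Nemhauser bound already recorded in \eqref{eq:boundNem}, and the relaxation argument stated when \ref{eq:conv1} was introduced. Your algebra for the greedy upper bound (tracking the normalization by $\tau_{\text{init}}$ so that $\zeta$ and $1-\zeta$ appear correctly) is exactly the computation behind \eqref{eq:boundNem}.
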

Corollary~\ref{cor:bound} can be used as a tool to asses the quality of any
suboptimal design.
Let $\Acal$ be an arbitrary $k$-subset of $\CCC^+$ and $\tau_\Acal =
\mathsf{log\,tree}_{n,w}(\Acal \cup \EEi)$. Define $\mathcal{U} \triangleq 
  \min\,\Big\{ \zeta\tau_\text{greedy} +
(1-\zeta)\tau_\text{init},\tau^\star_\text{cvx} \Big\}$. $\mathcal{U}$ can be
  computed by running the proposed greedy and convex approximation algorithms.
  From Corollary~\ref{cor:bound} it readily follows that $\OPT - \tau_\Acal \leq
  \mathcal{U} - \tau_\Acal$ and $\OPT/\tau_\Acal \leq \mathcal{U}/\tau_\Acal$.
  Therefore, although we may not have direct access to $\OPT$, we can still
  certify the near-optimality of any design such as $\Acal$ whose $\delta \triangleq \mathcal{U} -
  \tau_\Acal$ is sufficiently small.
\subsection{Numerical Results}
\begin{figure}[t]
  \centering
  \begin{subfigure}[t]{0.31\textwidth}
    %\includegraphics[width=\textwidth]{../code/edge-selection/figures/varying_num_edges-30-Dec-2015__07-43.eps}
    % this one was for RSS
    %\includegraphics[width=\textwidth]{../../rss16-trees/code/edge-selection/figures/varying_num_edges-31-Dec-2015__02-29.eps}
    % for arxiv
    \includegraphics[width=\textwidth]{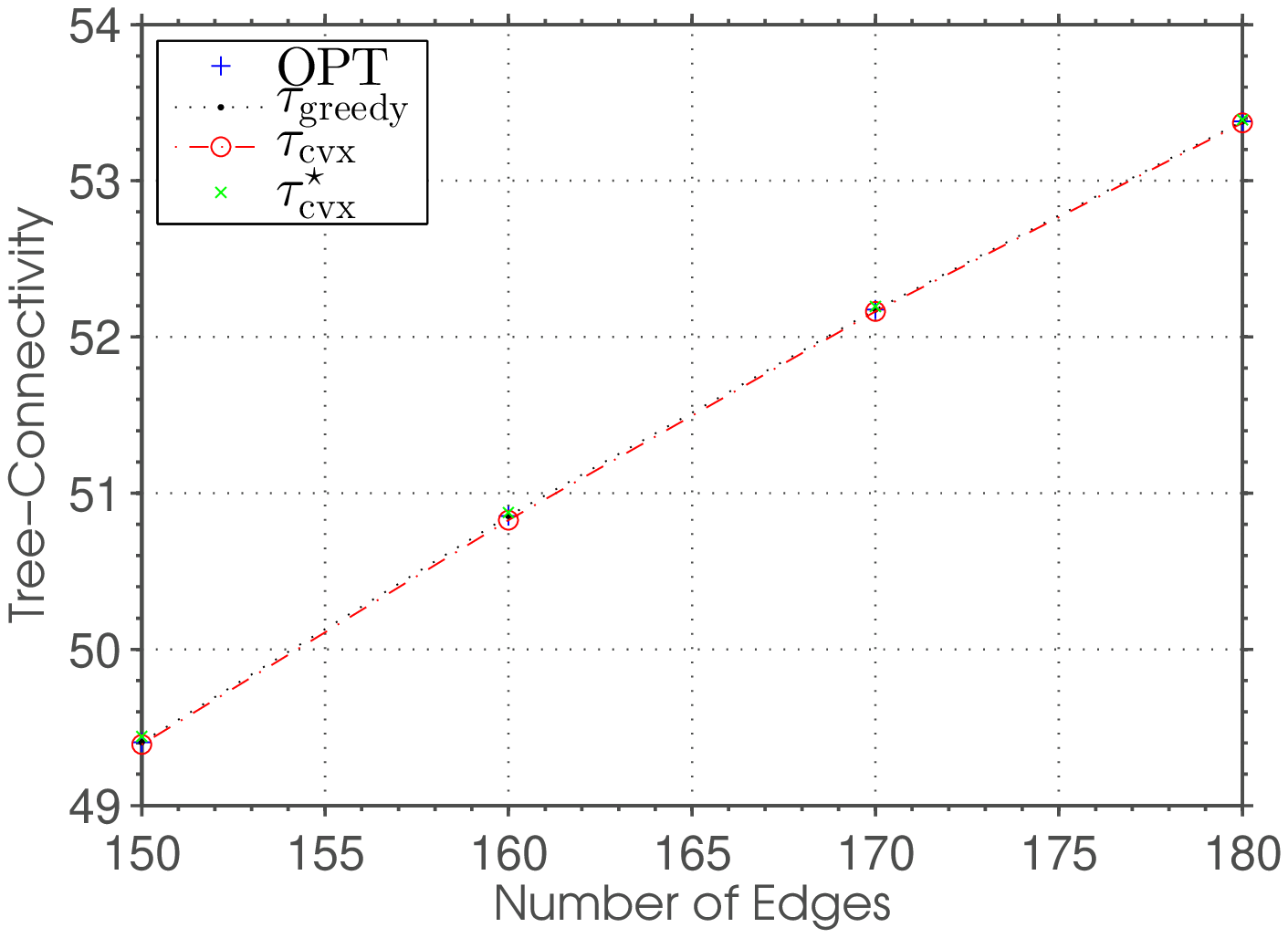}
    \caption{\small Varying $|\EE|$ for $k=5$ and $|\VV| = 20$}
      \label{fig:varEdgesOpt}
    \end{subfigure}
    ~
  \begin{subfigure}[t]{0.31\textwidth}
  %\includegraphics[width=0.4\textwidth]{../code/edge-selection/figures/varying_num_edges-29-Dec-2015__20-27.eps}
    % this one was for RSS
    %\includegraphics[width=\textwidth]{../../rss16-trees/code/edge-selection/figures/varying_num_edges-30-Dec-2015__01-42.eps}
    % for arxiv
    \includegraphics[width=\textwidth]{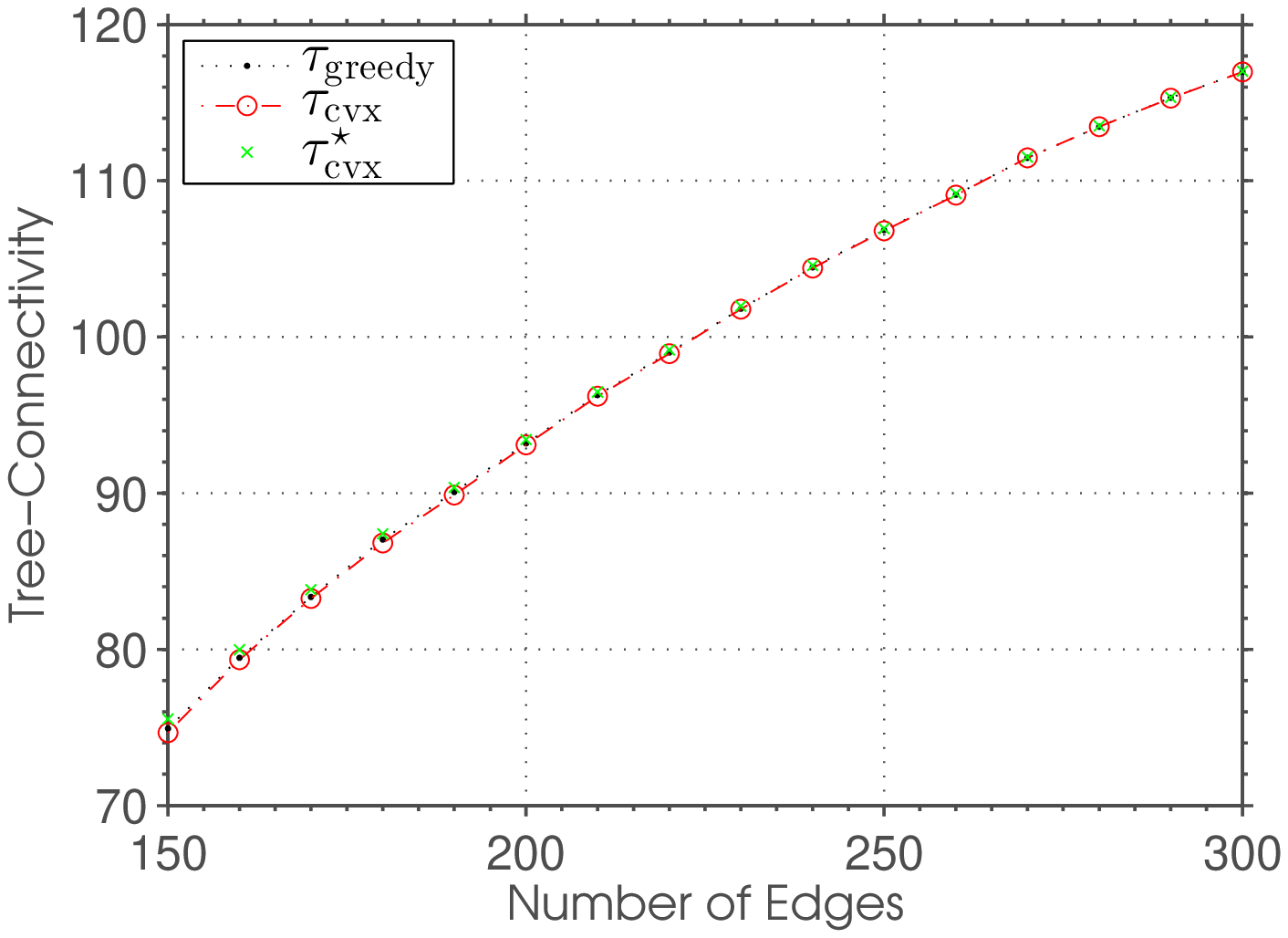}
    \caption{\small Varying $|\EE|$ for $k = 5$ and $|\VV| = 50$}
    \label{fig:varEdges}
  \end{subfigure}
  ~
  \begin{subfigure}[t]{0.31\textwidth}
    % this one was for RSS
    %\includegraphics[width=\textwidth]{../../rss16-trees/code/edge-selection/figures/varying_horizon-30-Dec-2015__00-15.eps}
    % for arxiv
    \includegraphics[width=\textwidth]{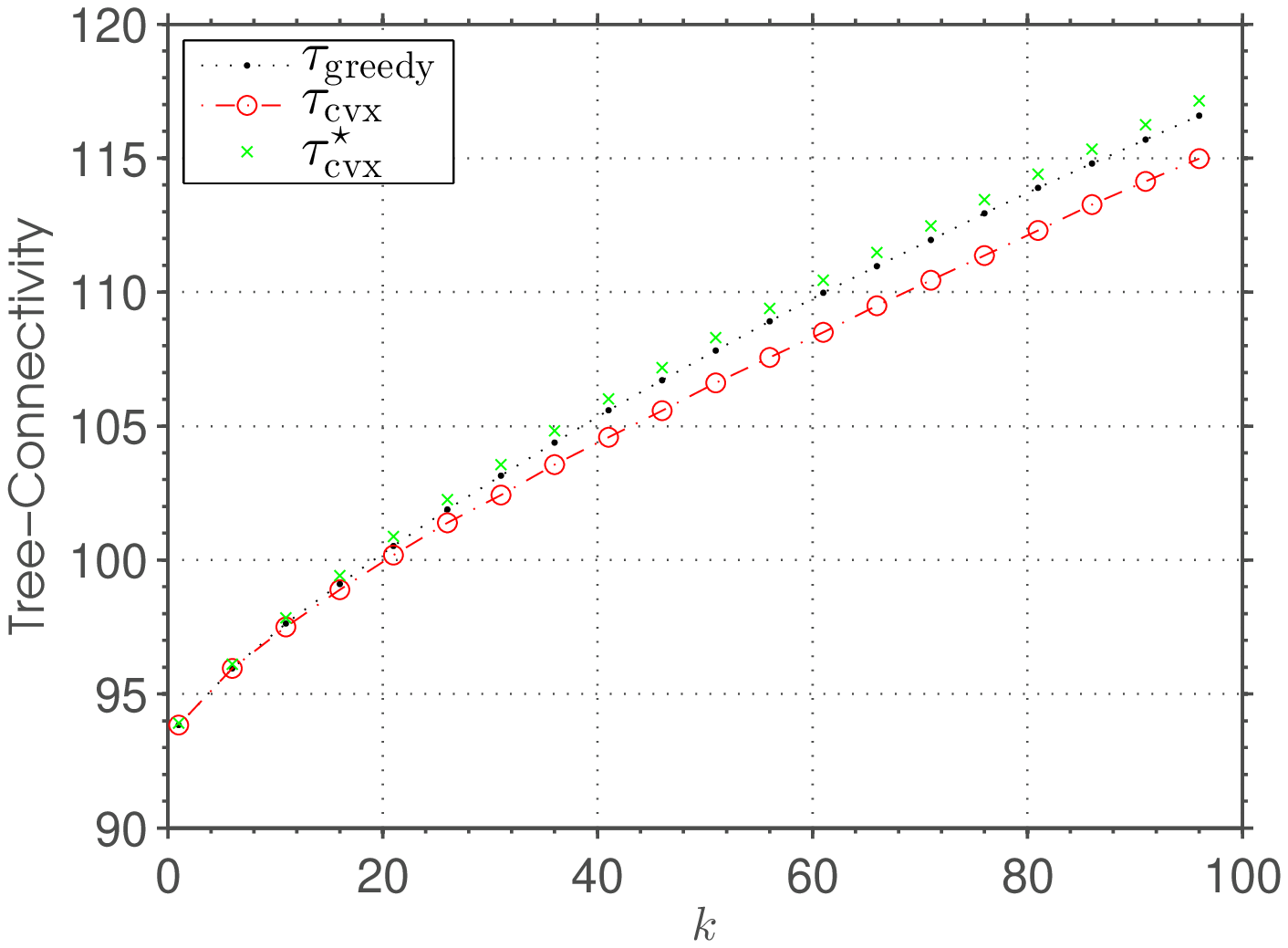}
    \caption{\small Varying $k$ for $|\VV| = 50$ and $|\EE| = 200$}
    \label{fig:varK}
  \end{subfigure}
  \caption{\small \esp{} on randomly generated graphs.
  Recall that according to Corollary~\ref{cor:bound}, $\tau_\text{greedy} \leq \mathsf{OPT} \leq
  \tau^\star_\text{cvx}$.}
  \label{fig:esps}
\end{figure}
We implemented Algorithm~\ref{alg:greed} in MATLAB.
Problem~\ref{eq:conv1}
is modelled using CVX \cite{cvx,gb08} and YALMIP \cite{YALMIP}, and solved
using SDPT$3$ \cite{tutuncu2003solving}.
Figure~\ref{fig:esps} illustrates the performance of our approximate
solutions to $k$-\esp{}$^+$ in randomly generated graphs. 
The search space in these experiments is $\CCC^+ = \EE(K_n) \setminus \EEi$.
Figures~\ref{fig:varEdgesOpt} and \ref{fig:varEdges} show tree-connectivity as a
function of number of randomly generated edges for a fixed $k=5$ and,
respectively, $|\VV| = 20$ and $|\VV| = 50$. 
Our results indicate that both algorithms exhibit
remarkable performances for $k = 5$. Note that computing $\mathsf{OPT}$ by exhaustive search is only feasible in
small instances such as Figure~\ref{fig:varEdgesOpt}. Nevertheless,
computing the exact $\mathsf{OPT}$ is not crucial for evaluating our approximate
algorithms, as it is tightly bounded in
$\mathopen{[}\tau_{\text{greedy}},{\tau}^\star_{\text{cvx}}\mathclose{]}$
as predicted by Corollary~\ref{cor:bound} (i.e., between each black $\mathrm{\cdot}$ and green $\times$).
Figure~\ref{fig:varK} shows the results obtained for varying $k$. The optimality
gap for $\tau_\text{cvx}$ gradually grows as the planning horizon $k$ increases.
Our greedy algorithm, however, still yields a near-optimal approximation. 
\section{Beyond $k$-\esp{}$^+$}
\subsection{Matroid Constraints}
Recall that $\varphi$ is monotone.
Therefore, except the degenerate case of $(\VV,\EEi \cup \CCC^+)$ not being connected, replacing the
cardinality constraint $|\EE| = k$ in $k$-\esp{}$^+$ with an inequality
constraint $|\EE| \leq k$ does not affect the set of optimal solutions. 
Consider the \emph{uniform matroid} 
\cite{oxley2006matroid} 
defined as
$(\CCC^+,\III_\text{U})$ where 
\[\III_\text{U} \triangleq \big\{\Acal \subseteq
\CCC^+ \,:\, |\Acal| \leq k\big\}.\]
The inequality cardinality
constraint can be expressed as $\EE \in \III_\text{U}$. 
\begin{definition}[Partition Matroid]
  Let $\mathcal{M}_1^+,\ldots,\mathcal{M}_\ell^+$ be a partition for 
  $\mathcal{M}^+$. Assign an integer (budget) $0\leq k_i \leq |\CCC^+_i|$ to each
  $\CCC^+_i$. Define \[\mathcal{I}_\text{P} \triangleq
    \Big\{\mathcal{A} \subseteq \mathcal{M}^+ \,:\, |\mathcal{A} \cap
    \mathcal{M}_i^+ | \leq k_i \,\,\text{   for   } i \in [\ell]\Big\}.\] The
    pair $(\mathcal{M}^+,\mathcal{I}_\text{P})$ is called a \emph{partition
    matroid}.
\end{definition}
Now let us consider \esp{} under a partition matroid constraint; i.e.,
\begin{equation}
  \begin{aligned}
    & \underset{}{\text{maximize}}
    & & \varphi(\EE)\\
    & \text{subject to}
    & & \EE \in \III_\text{P}.
  \end{aligned}
  \label{eq:addEdgePM}
\end{equation}
Note that $k$-\esp{}$^+$ is a special case of this problem
with $\ell = 1$ and $k_1 = k$. Now, by choosing different partitions for 
$\CCC^+$ and different budgets $k_i$ we can model a wide variety of
graph synthesis problems. For example consider the following extension of
$k$-\esp{}$^+$,
\begin{equation}
  \begin{aligned}
    & \underset{\EE \subseteq \CCC^+,|\EE|\leq k}{\text{maximize}}
    & & \varphi(\EE)\\
    & \text{subject to}
    & & \deg(v) \leq d.
  \end{aligned}
  \label{eq:addEdgeDeg}
\end{equation}
Define  $\CCC^+_v \triangleq \Big\{ e \in \CCC^+ \, : \, v \in e \Big\}$.
Now note that the constraints in \eqref{eq:addEdgeDeg} can be expressed as a partition
matroid with two blocks: (i) $\CCC^+_v$ with a budget of $k_1 = d$, and (ii) $\CCC^+ \setminus
\CCC^+_v$ with a budget of $k_2 = k-d$.
\subsubsection{Greedy Algorithm}
\begin{theorem}[\citet{fisher1978analysis}]
  The greedy algorithm attains at least $(1/2)f^\star$, where
  $f^\star$ is the maximum of any normalized monotone submodular function
  subject to a matroid constraint.
  \label{th:mat}
\end{theorem}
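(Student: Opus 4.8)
The plan is to follow the classical exchange argument of \citet{fisher1978analysis}, isolating the one place where the matroid structure (beyond a mere cardinality bound) is needed. Let $(\mathcal{W},\mathcal{I})$ be the matroid and $f:2^{\mathcal{W}}\to\mathbb{R}_{\geq 0}$ the normalized monotone submodular objective, with $f^\star = \max\{f(\Acal) : \Acal \in \mathcal{I}\}$. Write the greedy output as $g_1,\dots,g_r$ in order of selection, set $S_i \triangleq \{g_1,\dots,g_i\}$ and $S_0 \triangleq \varnothing$, and let $\rho_i \triangleq f(S_i) - f(S_{i-1})$ be the $i$th marginal gain. Since greedy halts only at a maximal independent set, $S_r$ is a basis. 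Since $f$ is monotone, some maximizer $O$ of $f$ over $\mathcal{I}$ can be taken to be a basis as well (extending any independent set to a basis never decreases $f$); as all bases of a matroid share the same cardinality, $|O| = |S_r| = r$ and $f(O) = f^\star$.

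First I would record the two elementary facts I will chain together. By the greedy rule, $\rho_i \geq f(S_{i-1}\cup\{e\}) - f(S_{i-1})$ for every $e$ with $S_{i-1}\cup\{e\}\in\mathcal{I}$, since at step $i$ the algorithm selects the feasible element of largest marginal gain. By submodularity and monotonicity, $f(O) \leq f(O\cup S_r)$ together with the telescoping bound gives $f(O) - f(S_r) \leq \sum_{o\in O\setminus S_r}\big(f(S_r\cup\{o\}) - f(S_r)\big)$.

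The crux, and the only genuinely matroid-theoretic step, is a matching lemma: there is a bijection $\phi$ from $O$ to $\{1,\dots,r\}$ such that $S_{\phi(o)-1}\cup\{o\}\in\mathcal{I}$ for every $o\in O$. I would obtain this from the symmetric (bijective) basis-exchange property of matroids: for the bases $S_r$ and $O$ there is a bijection $\psi:S_r\to O$ with $(S_r\setminus\{g\})\cup\{\psi(g)\}\in\mathcal{I}$ for all $g\in S_r$. Setting $\phi(\psi(g_i)) = i$ and observing that $S_{i-1}\cup\{\psi(g_i)\}$ is a subset of $(S_r\setminus\{g_i\})\cup\{\psi(g_i)\}$, the hereditary (downward-closed) axiom of $\mathcal{I}$ yields $S_{i-1}\cup\{\psi(g_i)\}\in\mathcal{I}$. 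This is precisely where $1/2$ rather than $1-1/e$ originates: the exchange guarantees only that $o$ was \emph{available} at its matched step $\phi(o)$, not throughout the run, so each greedy gain pays for exactly one optimal element without the geometric-series refinement available under a pure cardinality constraint. I expect this lemma to be the main obstacle; if one prefers not to cite symmetric exchange directly, it can be rebuilt via an augmenting-path / Hall-type argument on the bipartite exchange graph between $S_r$ and $O$, but this is standard and I would relegate it to a reference.

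Finally I would assemble the pieces. For each $o\in O\setminus S_r$, submodularity gives $f(S_r\cup\{o\}) - f(S_r) \leq f(S_{\phi(o)-1}\cup\{o\}) - f(S_{\phi(o)-1})$ since $S_{\phi(o)-1}\subseteq S_r$, and the matching lemma with greedy optimality bounds the latter by $\rho_{\phi(o)}$. As $\phi$ restricted to $O\setminus S_r$ is injective into $\{1,\dots,r\}$, summing yields $\sum_{o\in O\setminus S_r}\rho_{\phi(o)} \leq \sum_{i=1}^{r}\rho_i = f(S_r)$, using normalization $f(\varnothing)=0$ in the telescoping sum. Combining with the submodular bound above gives $f(O) - f(S_r) \leq f(S_r)$, hence $f(S_r) \geq \tfrac{1}{2} f(O) = \tfrac{1}{2} f^\star$, as claimed.
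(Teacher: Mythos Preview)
The paper does not prove this theorem at all; it is stated with attribution to \citet{fisher1978analysis} and invoked as a black box, so there is no in-paper proof to compare against. Your reconstruction is the classical exchange argument from that reference and is correct as written: the symmetric basis-exchange bijection (Brualdi) is exactly the right tool for the matching lemma, and the containment $S_{i-1}\cup\{\psi(g_i)\}\subseteq (S_r\setminus\{g_i\})\cup\{\psi(g_i)\}$ together with downward closure cleanly gives feasibility of each matched element at its assigned step.
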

According to Theorem~\ref{th:mat}, a slightly modified version of
Algorithm~\ref{alg:greed}, that abides by the matroid constraint while
greedily choosing the next best edge, yields a $\frac{1}{2}$-approximation \cite{fisher1978analysis,krause2012submodular}.
\subsubsection{Convex Relaxation}
The proposed convex relaxation of $k$-\esp{}$^+$ can be modified to
handle a partition matroid constraint. First note that \eqref{eq:addEdgePM} can be expressed
as
\begin{equation}
  \begin{aligned}
    & \underset{\ppp}{\text{maximize}}
    & & \log\det {\LL(\ppp)}\\
    & \text{subject to}
    && \sum_{\mathclap{e_i \in \CCC^+_j}} \pi_i \leq k_j, \, \forall j \in [\ell]\\
    &&&\pi_i \in \{0,1\}, \,\,\, \forall i \in [c].
  \end{aligned}
  \label{eq:nconvPM}
  \tag{P$_4$}
\end{equation}
Relaxing the binary constraints on $\pi_i$'s yields
\begin{equation}
  \begin{aligned}
    & \underset{\ppp}{\text{maximize}}
    & & \log\det {\LL(\ppp)}\\
    & \text{subject to}
    && \sum_{\mathclap{e_i \in \CCC^+_j}} \pi_i \leq k_j, \, \forall j \in [\ell]\\
    &&& 0\leq \pi_i \leq {1}, \,\,\, \forall i \in [c].
  \end{aligned}
  \label{eq:convPM}
  \tag{P$_5$}
\end{equation}
\ref{eq:convPM} is a convex optimization problem and, as before, can be solved 
efficiently using interior-point methods. A simple rounding strategy for
the solution of \ref{eq:convPM} is to pick the edges in $\CCC^+_i$ that are
associated to the $k_i$ largest $\pi_j^\star$'s (for $i \in [\ell]$). Moreover, the bounds in
\eqref{eq:optbound} (with $\zeta = 2$) and Theorem~\ref{th:cvx} can also be
readily generalized to handle partition matroid constraints. In particular, the
optimum value of \ref{eq:convPM} gives an upper bound for the optimum value of
\ref{eq:nconvPM}. Also, similar to Theorem~\ref{th:cvx}, \ref{eq:convPM} can be interpreted as maximizing the
expected value of the weighted number of spanning trees such that the expected
number of new edges sampled from $\CCC^+_i$ is at most $k_i$, for $i \in [\ell]$.
\subsection{Dual of $k$-\esp{}$^+$}
The dual of $k$-\esp{}$^+$ aims to identify and select the
minimal set of new edges from a candidate set $\CCC^+$ such that the
resulting tree-connectivity gain is at least $0 \leq \delta \leq
\varphi(\CCC^+)$ for some given $\delta$; i.e.,
\begin{equation}
  \begin{aligned}
    & \underset{\EE \subseteq \CCC^+}{\text{minimize}}
    & & |\EE| \\
    & \text{subject to}
    & & \varphi(\EE) \geq \delta.
  \end{aligned}
  \label{eq:espDual}
\end{equation}
\subsubsection{Greedy Algorithm}
The greedy algorithm for approximating the solution of \eqref{eq:espDual} is
outlined in
Algorithm~\ref{alg:greedD}. The only difference between
Algorithm~\ref{alg:greed} and Algorithm~\ref{alg:greedD} is that the latter
terminates when the $\delta$-bound is achieved (or, alternatively, when
there are no more
edges left in $\CCC^+$, which indicates an empty feasible set).
\citet{wolsey1982analysis} proves several upper bounds for the ratio between the
objective value achieved by the greedy algorithm and the optimum value of the
following class of problems,
\begin{equation}
  \begin{aligned}
     \underset{\Acal \subseteq \mathcal{W}}{\text{minimize}} \,\,\, |\Acal|
    \,\,\, \text{subject to} \,\,\, \phi(\Acal) \geq \phi_0,
  \end{aligned}
  \label{eq:arbit}
\end{equation}
in which $\phi : 2^\mathcal{W} \to \mathbb{R}$ is an arbitrary monotone
submodular function and $\phi_0 \leq \phi(\mathcal{W})$. Note that our problem
\eqref{eq:espDual} is special case of \eqref{eq:arbit}, and therefore (some of)
the bounds proved by \citet[Theorem 1]{wolsey1982analysis} also hold for
Algorithm~\ref{alg:greedD}.

\begin{theorem}[\citet{wolsey1982analysis}]
  Let $k_\OPT$ and $k_\GRD$ be the global minimum of \eqref{eq:espDual}
  and the objective value achieved by Algorithm~\ref{alg:greedD}, respectively.
  Also, let $\tilde{\EE}_{\normalfont\text{greedy}}$ be the set formed by
  Algorithm~\ref{alg:greedD} one step before termination.
  Then $k_\GRD \leq \gamma \, {k_\OPT}$ in which
 \begin{equation}
   \gamma \triangleq 1 + \log \Big(
   \frac{\delta}{\delta - \varphi(\tilde{\EE}_\GRD)} \Big).
   \label{}
 \end{equation}
  \label{th:dualGreedy}
\end{theorem}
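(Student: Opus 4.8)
The plan is to reproduce the classical greedy submodular-cover analysis, leveraging that $\varphi$ is normalized, monotone and submodular (Corollary~\ref{cor:add}). Fix an optimal solution $\EE^\star \subseteq \CCC^+$ with $|\EE^\star| = k_\OPT$ and $\varphi(\EE^\star) \geq \delta$, and let $\varnothing = \EE_0 \subseteq \EE_1 \subseteq \cdots \subseteq \EE_{k_\GRD}$ denote the nested sets produced by Algorithm~\ref{alg:greedD}, with $|\EE_i| = i$ and $\tilde{\EE}_\GRD = \EE_{k_\GRD - 1}$ the set formed one step before termination. I would track the residual $r_i \triangleq \delta - \varphi(\EE_i)$. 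Normalization gives $r_0 = \delta$, and since the $\delta$-bound is not yet met before termination, $r_i > 0$ for every $i < k_\GRD$; in particular $\delta - \varphi(\tilde{\EE}_\GRD) > 0$, so the logarithm appearing in $\gamma$ is well defined.

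The crux is a per-iteration inequality showing that each greedy marginal gain captures at least a $1/k_\OPT$ fraction of the current residual. For any $i$ with $\varphi(\EE_{i-1}) < \delta$, monotonicity gives $\varphi(\EE_{i-1} \cup \EE^\star) \geq \varphi(\EE^\star) \geq \delta$, while submodularity lets me telescope the joint gain $\varphi(\EE_{i-1} \cup \EE^\star) - \varphi(\EE_{i-1})$ into a sum of single-element marginals over $e \in \EE^\star$, each at most the marginal realized by the greedy pick $\varphi(\EE_i) - \varphi(\EE_{i-1})$. Since there are $k_\OPT$ such terms, this chains to
\[
r_{i-1} = \delta - \varphi(\EE_{i-1}) \leq \varphi(\EE_{i-1} \cup \EE^\star) - \varphi(\EE_{i-1}) \leq k_\OPT\big(\varphi(\EE_i) - \varphi(\EE_{i-1})\big) = k_\OPT\,(r_{i-1} - r_i),
\]
which rearranges to the recursion $r_i \leq (1 - 1/k_\OPT)\,r_{i-1}$.

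Unrolling yields $r_i \leq \delta\,(1 - 1/k_\OPT)^i \leq \delta\,e^{-i/k_\OPT}$. Evaluating at $i = k_\GRD - 1$ and using $r_{k_\GRD - 1} = \delta - \varphi(\tilde{\EE}_\GRD)$ gives $\delta - \varphi(\tilde{\EE}_\GRD) \leq \delta\,e^{-(k_\GRD - 1)/k_\OPT}$. Taking logarithms and rearranging isolates $k_\GRD - 1 \leq k_\OPT \log\!\big(\delta/(\delta - \varphi(\tilde{\EE}_\GRD))\big)$. Finally, since $\delta > 0$ forces every feasible cover to be nonempty, $k_\OPT \geq 1$, so the additive constant can be absorbed into the multiplicative factor: $k_\GRD \leq 1 + k_\OPT \log\!\big(\delta/(\delta - \varphi(\tilde{\EE}_\GRD))\big) \leq k_\OPT\big(1 + \log(\delta/(\delta - \varphi(\tilde{\EE}_\GRD)))\big) = \gamma\,k_\OPT$, as claimed.

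The main obstacle is the per-iteration inequality, as it is the sole place where submodularity enters: the telescoping must be carried out against the current greedy set $\EE_{i-1}$ rather than $\varnothing$, so that every single-element marginal is genuinely dominated by the greedy choice made at that step. The remaining care is bookkeeping — confirming $\delta - \varphi(\tilde{\EE}_\GRD) > 0$ (guaranteed because $\tilde{\EE}_\GRD$ precedes termination) and $k_\OPT \geq 1$ (guaranteed once the degenerate case $\delta = 0$, for which $\varnothing$ is already feasible, is set aside).
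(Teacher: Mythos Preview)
Your proof is correct and follows the classical greedy submodular-cover analysis due to Wolsey. Note, however, that the paper does not actually supply its own proof of this theorem: it is stated as a citation of \citet{wolsey1982analysis} (see Theorem~\ref{th:dualGreedy} and the surrounding discussion), and no proof appears in the appendix. Your argument is essentially the standard one from that reference --- the per-iteration contraction $r_i \leq (1-1/k_\OPT)r_{i-1}$ obtained by telescoping the optimal cover's marginal gains against the current greedy set, followed by the exponential bound and the absorption of the additive $+1$ using $k_\OPT \geq 1$ --- so there is no methodological difference to compare.
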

The upper bound given above and some of the other bounds in \cite{wolsey1982analysis}
are \emph{a posteriori} in the sense that they can be computed only \emph{after} running
the greedy algorithm. 
\begin{algorithm}[tb]
  \caption{Greedy Dual Edge Selection}
  \label{alg:greedD}
  \begin{algorithmic}[1]
    \Function{{GreedyDualESP}}{$\LL_\text{init},\CCC^+,\delta$}
      \State $\EE \gets \varnothing$
      \State $\LL \gets \LL_\text{init}$
      \State $\mathbf{C} \gets$\textrm{Cholesky}({$\LL$})
      \While{$ (\log\det\LL < \delta) \,\wedge \, (\EE \neq \CCC^+)$}
      \State $e_{uv}^\star$ $\gets$
      \textrm{BestEdge}($\CCC^+ \setminus \EE,\KK$)
	\State $\EE \gets \EE \cup \{e^\star\}$
	\State $\aaa_{uv} \gets \ee_u - \ee_v$
	\State $\LL \gets \LL +
	w(e_{uv}^{\star})\aaa_{uv}\aaa_{uv}^{\top}$
      \State $\KK \gets
      \textrm{CholeskyUpdate}(\mathbf{C},\sqrt{w(e_{uv}^\star)}\aaa_{uv})$\Comment{{Rank-one
      update}}
      \EndWhile
      \State \Return $\EE$
    \EndFunction
  \end{algorithmic}
\end{algorithm}

\subsubsection{Convex Relaxation}
Let $\tau_\text{init} \triangleq \log\det {\LL(\zero)}$. The dual
problem can be expressed as
\begin{equation}
  \begin{aligned}
    & \underset{\ppp}{\text{minimize}}
    && \sum_{i=1}^c \pi_i \\
    & \text{subject to}
    & & \log\det {\LL(\ppp)} \geq \delta + \tau_\text{init},\\
    &&& \pi_i \in \{0,1\}, \, \forall i \in [c].
  \end{aligned}
  \label{eq:dconv0}
  \tag{D$_1$}
\end{equation}
The combinatorial difficulty of the dual formulation of \esp{} is manifested in the
 binary constraints of \ref{eq:dconv0}. Relaxing these constraints into
 $0 \leq \pi_i \leq 1$ yields the following convex optimization problem,
 \begin{equation}
  \begin{aligned}
    & \underset{\ppp}{\text{minimize}}
    && \sum_{i=1}^{c} \pi_i \\
    & \text{subject to}
    & & \log\det {\LL(\ppp)} \geq \delta + \tau_\text{init},\\
    &&& 0\leq \pi_i \leq {1}, \, \forall i \in [c].
  \end{aligned}
  \label{eq:dconv1}
  \tag{D$_2$}
\end{equation}
\ref{eq:dconv1} can be solved efficiently using interior-point methods.
Let $\ppp^\star$ be the minimizer of
\ref{eq:dconv1}. $\sum_{i=1}^c \pi^\star_i$ is a lower bound for
the optimum value of the dual \esp{} \ref{eq:dconv0}.
If $\ppp^\star \in
\{0,1\}^c$, $\ppp^\star$ is also
a globally optimal solution for \ref{eq:dconv0}. 
Otherwise we need a rounding scheme to map $\ppp^\star$ into a
feasible (suboptimal) solution for \ref{eq:dconv0}.
A simple deterministic rounding strategy is
the following.
\begin{itemize}
  \item[-] Step 1. Sort the edges in $\CCC^+$ according to $\ppp^\star$ in
    descending order.
  \item[-] Step 2. Pick edges from the sorted list until 
    $\log\det\LL(\ppp) \geq \delta + \tau_\text{init}$.
\end{itemize}
Theorem~\ref{th:cvx} allows us to interpret \ref{eq:dconv1} as
finding the optimal sampling probabilities $\ppp^\star$ that
minimizes the expected number of new edges such that the expected weighted
number of spanning trees is at least $\exp(\delta + \tau_\text{init})$; i.e.,
\begin{equation}
  \begin{aligned}
    & \underset{\ppp}{\text{minimize}}
    && {\mathbb{E}}_{{\mathcal{H} \sim
  \mathbb{G}(\GG_\bullet,\ppp_\bullet)}}
  \big[|\EE(\mathcal{H})|\big],\\
    & \text{subject to}
    & & {\mathbb{E}}_{{\mathcal{H} \sim
    \mathbb{G}(\GG_\bullet,\ppp_\bullet)}}
    \big[t_w(\mathcal{H})\big] \geq \exp(\delta+\tau_\text{init}), \\
    &&& 0\leq \pi_i \leq {1}, \, \forall i \in [s],
  \end{aligned}
  \label{eq:dconv1prime}
  \tag{D$_2^\prime$}
\end{equation}
in which $\mathbb{G}(\GG_\bullet,\ppp_\bullet)$ is defined in
Theorem~\ref{th:cvx}.
This narrative suggests a randomized rounding scheme in which $e_i \in \CCC^+$
is selected with probability $\pi^\star_i$. The expected number of selected
edges by this procedure is $\sum_{i=1}^{c} \pi^\star_i$.
\subsubsection{Certifying Near-Optimality}
\begin{corollary}
  Define $\zeta^\ast \triangleq 1/\gamma$ where $\gamma$ is the approximation
  factor given by Theorem~\ref{th:dualGreedy}. Let $k_{\normalfont\text{cvx}}$
  be the number of new edges selected by the deterministic rounding procedure
  described above.
  \begin{equation}
    \normalfont
    \max \bigg\{ \zeta^\ast k_\GRD, \Big\lceil \sum_{i=1}^{c} \pi^\star_i
  \Big\rceil \bigg\} \leq k_\OPT
    \leq \min \Big\{ k_\GRD,k_\text{cvx} \Big\}.
    \label{}
  \end{equation}
  \label{cor:dual}
\end{corollary}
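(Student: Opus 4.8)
The plan is to establish the four inequalities bundled in the claim one at a time: two upper bounds on $k_\OPT$ arising from the feasibility of the two algorithms' outputs, and two lower bounds arising from Wolsey's multiplicative guarantee and from the convex relaxation, respectively. Throughout I assume \eqref{eq:espDual} is feasible, i.e.\ $\delta \leq \varphi(\CCC^+)$, so that $k_\OPT$ is well defined; in the infeasible case there is nothing to prove. The result then assembles immediately by taking the $\max$ of the two lower bounds and the $\min$ of the two upper bounds.

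For the upper bound $k_\OPT \leq \min\{k_\GRD,\,k_\text{cvx}\}$, I would show that both the greedy output and the rounded convex solution are feasible for \eqref{eq:espDual} and then use that $k_\OPT$ is the minimum objective over feasible sets. Algorithm~\ref{alg:greedD} terminates precisely when the $\delta$-bound $\varphi(\EE_\GRD) \geq \delta$ is met (the alternative exit, exhausting $\CCC^+$, is excluded by feasibility), so $\EE_\GRD$ is feasible and $k_\OPT \leq |\EE_\GRD| = k_\GRD$. Likewise, Step~2 of the deterministic rounding procedure keeps adding edges until $\log\det \LL(\ppp) \geq \delta + \tau_\text{init}$, i.e.\ until $\varphi \geq \delta$; hence the rounded design is feasible and $k_\OPT \leq k_\text{cvx}$.

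For the lower bound, the term $\zeta^\ast k_\GRD \leq k_\OPT$ is a direct rearrangement of Theorem~\ref{th:dualGreedy}: since $k_\GRD \leq \gamma\,k_\OPT$ and $\zeta^\ast = 1/\gamma$, dividing through yields $\zeta^\ast k_\GRD \leq k_\OPT$. The remaining term $\lceil \sum_{i} \pi_i^\star \rceil \leq k_\OPT$ follows from the relaxation argument. Problem~\ref{eq:dconv1} is obtained from \ref{eq:dconv0} by enlarging the feasible set from $\{0,1\}^c$ to $[0,1]^c$ while keeping the same objective $\sum_i \pi_i$; for a minimization problem this can only lower the optimum, so $\sum_i \pi_i^\star$ lower-bounds the optimum of \ref{eq:dconv0}, which in turn equals $k_\OPT$ since \ref{eq:dconv0} merely restates \eqref{eq:espDual}. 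Because $k_\OPT$ is a nonnegative integer satisfying $\sum_i \pi_i^\star \leq k_\OPT$, I may tighten this to $\lceil \sum_i \pi_i^\star \rceil \leq k_\OPT$.

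Each of the four pieces is elementary, so I do not anticipate a genuine obstacle. The only points requiring care are keeping the relaxation direction straight (enlarging the feasible set of a \emph{minimization} problem decreases its optimum, hence the convex value lower-bounds the integer optimum) and the integrality-based ceiling step, which is justified simply by noting that the least integer dominating $\sum_i \pi_i^\star$ cannot exceed the integer $k_\OPT$.
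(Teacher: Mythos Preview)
Your proposal is correct and follows exactly the reasoning the paper intends: the corollary is stated without an explicit proof in the paper, as it is meant to follow immediately from Theorem~\ref{th:dualGreedy} (giving $\zeta^\ast k_\GRD \leq k_\OPT$), the relaxation inequality stated just before the corollary (giving $\sum_i \pi_i^\star \leq k_\OPT$, tightened by integrality), and the feasibility of the greedy and deterministically rounded outputs (giving the upper bounds). Your write-up simply spells out these four pieces, which is precisely the intended argument.
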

As we did before for $k$-\esp{}$^+$, the lower bound provided by
Corollary~\ref{cor:dual} can be used to construct an upper bound for the gap between
$k_\OPT$ and any (feasible) suboptimal design with an objective value of
$k_\Acal$. Let $\mathcal{L} \triangleq \max \Big\{ \zeta^\ast k_\GRD,
\Big\lceil\sum_{i=1}^{c} \pi^\star_i \Big\rceil\Big\}$. $\mathcal{L}$ can be computed by running
Algorithm~\ref{alg:greedD} and solving the convex optimization problem
\ref{eq:dconv1}. Consequently, $k_\Acal - k_\OPT \leq k_\Acal - \mathcal{L}$ and
$k_\Acal/k_\OPT \leq k_\Acal/\mathcal{L}$.

\section{Conclusion}
\label{sec:conclusion}
We studied the problem of designing near-$t$-optimal graphs under several types
of constraints and formulations. Several new structures were revealed and
exploited to design efficient approximation algorithms.
In particular, we proved that the weighted number of
spanning trees in connected graphs can be posed as a monotone log-submodular
function of the edge set. Our approximation algorithms can find near-optimal
solutions with performance guarantees. They also provide \emph{a posteriori} near-optimality
certificates for arbitrary designs. Our results can be readily applied to a wide
verity of applications involving graph synthesis and graph sparsification
scenarios.

\bibliographystyle{plainnat}
\bibliography{../../rss16-trees/paper/graph}

\begin{thebibliography}{31}
\providecommand{\natexlab}[1]{#1}
\providecommand{\url}[1]{\texttt{#1}}
\expandafter\ifx\csname urlstyle\endcsname\relax
  \providecommand{\doi}[1]{doi: #1}\else
  \providecommand{\doi}{doi: \begingroup \urlstyle{rm}\Url}\fi

\bibitem[Bailey and Cameron(2009)]{bailey2009combinatorics}
Rosemary~A Bailey and Peter~J Cameron.
\newblock Combinatorics of optimal designs.
\newblock \emph{Surveys in Combinatorics}, 365:\penalty0 19--73, 2009.

\bibitem[Barooah and Hespanha(2007)]{Barooah2007}
Prabir Barooah and Joao~P Hespanha.
\newblock Estimation on graphs from relative measurements.
\newblock \emph{Control Systems, IEEE}, 27\penalty0 (4):\penalty0 57--74, 2007.

\bibitem[Bauer et~al.(1987)Bauer, Boesch, Suffel, and
  Van~Slyke]{bauer1987validity}
Douglas Bauer, Francis~T Boesch, Charles Suffel, and R~Van~Slyke.
\newblock On the validity of a reduction of reliable network design to a graph
  extremal problem.
\newblock \emph{Circuits and Systems, IEEE Transactions on}, 34\penalty0
  (12):\penalty0 1579--1581, 1987.

\bibitem[Boesch et~al.(2009)Boesch, Satyanarayana, and
  Suffel]{boesch2009survey}
Francis~T Boesch, Appajosyula Satyanarayana, and Charles~L Suffel.
\newblock A survey of some network reliability analysis and synthesis results.
\newblock \emph{Networks}, 54\penalty0 (2):\penalty0 99--107, 2009.

\bibitem[Boyd and Vandenberghe(2004)]{Boyd2004}
Stephen Boyd and Lieven Vandenberghe.
\newblock \emph{Convex optimization}.
\newblock Cambridge university press, 2004.

\bibitem[Cheng(1981)]{cheng1981maximizing}
Ching-Shui Cheng.
\newblock Maximizing the total number of spanning trees in a graph: two related
  problems in graph theory and optimum design theory.
\newblock \emph{Journal of Combinatorial Theory, Series B}, 31\penalty0
  (2):\penalty0 240--248, 1981.

\bibitem[Fisher et~al.(1978)Fisher, Nemhauser, and Wolsey]{fisher1978analysis}
Marshall~L Fisher, George~L Nemhauser, and Laurence~A Wolsey.
\newblock \emph{An analysis of approximations for maximizing submodular set
  functions—II}.
\newblock Springer, 1978.

\bibitem[Gaffke(1982)]{gaffke1982d}
N~Gaffke.
\newblock D-optimal block designs with at most six varieties.
\newblock \emph{Journal of Statistical Planning and Inference}, 6\penalty0
  (2):\penalty0 183--200, 1982.

\bibitem[Ghosh et~al.(2008)Ghosh, Boyd, and Saberi]{ghosh2008minimizing}
Arpita Ghosh, Stephen Boyd, and Amin Saberi.
\newblock Minimizing effective resistance of a graph.
\newblock \emph{SIAM review}, 50\penalty0 (1):\penalty0 37--66, 2008.

\bibitem[Godsil and Royle(2001)]{Godsil2001}
Chris Godsil and Gordon Royle.
\newblock \emph{Algebraic graph theory}.
\newblock Graduate Texts in Mathematics Series. Springer London, Limited, 2001.
\newblock ISBN 9780387952413.

\bibitem[Grant and Boyd(2008)]{gb08}
Michael Grant and Stephen Boyd.
\newblock Graph implementations for nonsmooth convex programs.
\newblock In V.~Blondel, S.~Boyd, and H.~Kimura, editors, \emph{Recent Advances
  in Learning and Control}, Lecture Notes in Control and Information Sciences,
  pages 95--110. Springer-Verlag Limited, 2008.
\newblock \url{http://stanford.edu/~boyd/graph_dcp.html}.

\bibitem[Grant and Boyd(2014)]{cvx}
Michael Grant and Stephen Boyd.
\newblock {CVX}: Matlab software for disciplined convex programming, version
  2.1.
\newblock \url{http://cvxr.com/cvx}, March 2014.

\bibitem[Hochbaum(1996)]{hochbaum1996approximation}
Dorit~S Hochbaum.
\newblock \emph{Approximation algorithms for NP-hard problems}.
\newblock PWS Publishing Co., 1996.

\bibitem[Joshi and Boyd(2009)]{Joshi2009}
Siddharth Joshi and Stephen Boyd.
\newblock Sensor selection via convex optimization.
\newblock \emph{Signal Processing, IEEE Transactions on}, 57\penalty0
  (2):\penalty0 451--462, 2009.

\bibitem[Kelmans(1996)]{kelmans1996graphs}
Alexander~K Kelmans.
\newblock On graphs with the maximum number of spanning trees.
\newblock \emph{Random Structures \& Algorithms}, 9\penalty0 (1-2):\penalty0
  177--192, 1996.

\bibitem[Kelmans and Kimelfeld(1983)]{kelmans1983multiplicative}
Alexander~K Kelmans and BN~Kimelfeld.
\newblock Multiplicative submodularity of a matrix's principal minor as a
  function of the set of its rows and some combinatorial applications.
\newblock \emph{Discrete Mathematics}, 44\penalty0 (1):\penalty0 113--116,
  1983.

\bibitem[Krause and Golovin(2012)]{krause2012submodular}
Andreas Krause and Daniel Golovin.
\newblock Submodular function maximization.
\newblock \emph{Tractability: Practical Approaches to Hard Problems},
  3:\penalty0 19, 2012.

\bibitem[L\"{o}fberg(2004)]{YALMIP}
Johan L\"{o}fberg.
\newblock Yalmip : A toolbox for modeling and optimization in {MATLAB}.
\newblock In \emph{Proceedings of the CACSD Conference}, Taipei, Taiwan, 2004.
\newblock URL \url{http://users.isy.liu.se/johanl/yalmip}.

\bibitem[Lov{\'a}sz(1993)]{lovasz1993random}
L{\'a}szl{\'o} Lov{\'a}sz.
\newblock Random walks on graphs: A survey.
\newblock \emph{Combinatorics, Paul Erd\H{o}s is eighty}, 2\penalty0
  (1):\penalty0 1--46, 1993.

\bibitem[Mesbahi and Egerstedt(2010)]{Mesbahi2010}
Mehran Mesbahi and Magnus Egerstedt.
\newblock \emph{Graph theoretic methods in multiagent networks}.
\newblock Princeton University Press, 2010.

\bibitem[Meyer(2000)]{Meyer2000}
Carl~D Meyer.
\newblock \emph{Matrix analysis and applied linear algebra}.
\newblock {SIAM}, 2000.

\bibitem[Myrvold(1996)]{myrvold1996reliable}
Wendy Myrvold.
\newblock Reliable network synthesis: Some recent developments.
\newblock In \emph{Proceedings of International Conference on Graph Theory,
  Combinatorics, Algorithms, and Applications}, 1996.

\bibitem[Nemhauser et~al.(1978)Nemhauser, Wolsey, and
  Fisher]{nemhauser1978analysis}
George~L Nemhauser, Laurence~A Wolsey, and Marshall~L Fisher.
\newblock An analysis of approximations for maximizing submodular set functions
  {- I}.
\newblock \emph{Mathematical Programming}, 14\penalty0 (1):\penalty0 265--294,
  1978.

\bibitem[Oxley(2006)]{oxley2006matroid}
James~G Oxley.
\newblock \emph{Matroid theory}, volume~3.
\newblock Oxford university press, 2006.

\bibitem[Petingi and Rodriguez(2002)]{petingi2002new}
Louis Petingi and Jose Rodriguez.
\newblock A new technique for the characterization of graphs with a maximum
  number of spanning trees.
\newblock \emph{Discrete mathematics}, 244\penalty0 (1):\penalty0 351--373,
  2002.

\bibitem[Pukelsheim(1993)]{Pukelsheim1993}
Friedrich Pukelsheim.
\newblock \emph{Optimal design of experiments}, volume~50.
\newblock SIAM, 1993.

\bibitem[Shier(1974)]{shier1974maximizing}
DR~Shier.
\newblock Maximizing the number of spanning trees in a graph with n nodes and m
  edges.
\newblock \emph{Journal Research National Bureau of Standards, Section B},
  78:\penalty0 193--196, 1974.

\bibitem[T{\"u}t{\"u}nc{\"u} et~al.(2003)T{\"u}t{\"u}nc{\"u}, Toh, and
  Todd]{tutuncu2003solving}
Reha~H T{\"u}t{\"u}nc{\"u}, Kim~C Toh, and Michael~J Todd.
\newblock Solving semidefinite-quadratic-linear programs using sdpt3.
\newblock \emph{Mathematical programming}, 95\penalty0 (2):\penalty0 189--217,
  2003.

\bibitem[Vandenberghe et~al.(1998)Vandenberghe, Boyd, and
  Wu]{vandenberghe1998determinant}
Lieven Vandenberghe, Stephen Boyd, and Shao-Po Wu.
\newblock Determinant maximization with linear matrix inequality constraints.
\newblock \emph{SIAM journal on matrix analysis and applications}, 19\penalty0
  (2):\penalty0 499--533, 1998.

\bibitem[Weichenberg et~al.(2004)Weichenberg, Chan, and
  M{\'e}dard]{weichenberg2004high}
Guy Weichenberg, Vincent~WS Chan, and Muriel M{\'e}dard.
\newblock High-reliability topological architectures for networks under stress.
\newblock \emph{Selected Areas in Communications, IEEE Journal on}, 22\penalty0
  (9):\penalty0 1830--1845, 2004.

\bibitem[Wolsey(1982)]{wolsey1982analysis}
Laurence~A Wolsey.
\newblock An analysis of the greedy algorithm for the submodular set covering
  problem.
\newblock \emph{Combinatorica}, 2\penalty0 (4):\penalty0 385--393, 1982.

\end{thebibliography}
\clearpage
\begin{appendix}
\section{Proofs}
\label{app:lemmas}
\begin{lemma}
  For any $\MM \in \mathbb{S}^n_{>0}$ and $\NN \in \mathbb{S}^{n}_{> 0}$, $\MM \succeq
  \NN$ iff
  $\NN^{-1} \succeq \MM^{-1}$.
  \label{th:inverseOp}
\end{lemma}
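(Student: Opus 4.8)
The plan is to establish the equivalence through the \emph{operator antimonotonicity} of matrix inversion, which reduces to a single eigenvalue observation after an appropriate congruence transformation. Since the statement is an ``iff'' that is symmetric under swapping $\MM \leftrightarrow \NN$ (using $(\MM^{-1})^{-1} = \MM$), it suffices to prove one direction: assuming $\MM \succeq \NN$, I would show $\NN^{-1} \succeq \MM^{-1}$.

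First I would exploit that $\NN \succ 0$ admits a positive definite square root $\NN^{1/2}$. Applying the congruence $\xx \mapsto \NN^{-1/2}\xx$ to the inequality $\MM \succeq \NN$ preserves the semidefinite ordering and yields
\begin{equation}
  \PP \triangleq \NN^{-1/2}\MM\,\NN^{-1/2} \succeq \NN^{-1/2}\NN\,\NN^{-1/2} = \II.
  \label{}
\end{equation}
The key step is then purely spectral: because $\PP$ is symmetric with $\PP \succeq \II$, every eigenvalue of $\PP$ is at least $1$, so every eigenvalue of $\PP^{-1}$ is at most $1$, giving $\PP^{-1} \preceq \II$. Recognizing that $\PP^{-1} = \NN^{1/2}\MM^{-1}\NN^{1/2}$, I would apply the congruence $\xx \mapsto \NN^{-1/2}\xx$ once more to $\PP^{-1} \preceq \II$ to conclude $\MM^{-1} \preceq \NN^{-1}$, i.e., $\NN^{-1} \succeq \MM^{-1}$, as desired.

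I do not anticipate a genuine obstacle here, as the argument is elementary; the only points requiring care are the two facts invoked in passing, namely that congruence by an invertible matrix preserves $\succeq$ (immediate from the quadratic-form definition, since $\xx^\top(\MM-\NN)\xx \geq 0$ for all $\xx$ is equivalent to $\yy^\top \NN^{-1/2}(\MM-\NN)\NN^{-1/2}\yy \geq 0$ for all $\yy$ via the substitution $\yy = \NN^{1/2}\xx$), and that inversion reverses the ordering on the eigenvalues of a single symmetric matrix. Both are standard, so the ``hard part'' is merely bookkeeping: ensuring the square root $\NN^{1/2}$ is used consistently and that the two congruence transformations are correctly inverted. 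The reverse implication then follows verbatim by interchanging the roles of $\MM$ and $\NN$.
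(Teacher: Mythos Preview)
Your proof is correct and follows essentially the same approach as the paper: reduce to one direction by symmetry, apply the congruence $\NN^{-1/2}(\cdot)\NN^{-1/2}$ to obtain $\PP \succeq \II$, argue spectrally, and undo with a second congruence. The only cosmetic difference is that the paper passes from $\NN^{-1/2}\MM\NN^{-1/2}$ to $\MM^{1/2}\NN^{-1}\MM^{1/2}$ via the ``$AB$ and $BA$ have the same spectrum'' fact and then conjugates by $\MM^{-1/2}$, whereas you invert $\PP$ directly and conjugate by $\NN^{-1/2}$; both routes are equivalent.
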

\begin{proof}
  Due to symmetry it suffices to prove that $\MM \succeq \NN \Rightarrow
  \NN^{-1} \succeq \MM^{-1}$. Multiplying both sides of $\MM \succeq \NN$ by
  $\NN^{-\frac{1}{2}}$ from left and right results in
  $\NN^{-\frac{1}{2}}\MM\NN^{-\frac{1}{2}} - \II \succeq \zero$. Therefore the
  eigenvalues of $\NN^{-\frac{1}{2}}\MM\NN^{-\frac{1}{2}}$, which are the same
  as the eigenvalues of
  $\MM^{\frac{1}{2}}\NN^{-1}\MM^{\frac{1}{2}}$,\footnote{Recall that $\MM\NN$
  and $\NN\MM$ have the same spectrum.} are at least $1$. Therefore
  $\MM^{\frac{1}{2}}\NN^{-1}\MM^{\frac{1}{2}} - \II \succeq \zero$. Multiplying
  both sides by $\MM^{-\frac{1}{2}}$ from left and right proves the lemma.
\end{proof}

\begin{lemma}[Matrix Determinant Lemma]
  For any non-singular $\MM \in \mathbb{R}^{n \times n}$ and $\cc,\dd \in
  \mathbb{R}^{n}$,
  \begin{equation}
    \det(\MM + \cc \dd^{\top}) = (1+\dd^{\top} \MM^{-1} \cc) \det\MM.
    \label{}
  \end{equation}
  \label{th:rank1}
\end{lemma}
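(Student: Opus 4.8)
The plan is to deduce this classical identity from the multiplicativity of the determinant applied to a single bordered $(n+1)\times(n+1)$ matrix, evaluated through two different block-triangular factorizations (equivalently, two Schur complements). Since $\MM$ is assumed non-singular, $\MM^{-1}$ exists and the construction below is well defined, so this is the only hypothesis I will actually use.

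First I would introduce the bordered matrix
\[
\begin{pmatrix} \MM & -\cc \\ \dd^\top & 1 \end{pmatrix} \in \mathbb{R}^{(n+1)\times(n+1)},
\]
and exhibit two factorizations of it into products of block-triangular matrices. Eliminating below the $(1,1)$ block (the Schur complement of $\MM$) yields
\[
\begin{pmatrix} \MM & -\cc \\ \dd^\top & 1 \end{pmatrix}
= \begin{pmatrix} \II & \zero \\ \dd^\top\MM^{-1} & 1 \end{pmatrix}
\begin{pmatrix} \MM & -\cc \\ \zero & 1 + \dd^\top\MM^{-1}\cc \end{pmatrix},
\]
whereas eliminating against the scalar corner (the Schur complement of the bottom-right $1$) yields
\[
\begin{pmatrix} \MM & -\cc \\ \dd^\top & 1 \end{pmatrix}
= \begin{pmatrix} \MM + \cc\dd^\top & -\cc \\ \zero & 1 \end{pmatrix}
\begin{pmatrix} \II & \zero \\ \dd^\top & 1 \end{pmatrix}.
\]
Both identities are confirmed by a direct block multiplication; each factor on the right is block-triangular with $\II$ or $1$ occupying part of its diagonal, so its determinant is the product of the diagonal blocks' determinants.

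Taking determinants and using that the determinant is multiplicative, the first factorization evaluates the bordered determinant as $\det(\MM)\,(1 + \dd^\top\MM^{-1}\cc)$, while the second evaluates it as $\det(\MM + \cc\dd^\top)$. Equating these two expressions for the same quantity proves the lemma. I do not expect a genuine obstacle here: the entire content is the bordering trick, and invertibility of $\MM$ is exactly what legitimizes the appearance of $\MM^{-1}$ in the first factorization. As an alternative route one could first establish the special case $\det(\II + \cc\dd^\top) = 1 + \dd^\top\cc$ (for instance by noting that $\cc\dd^\top$ has the single nonzero eigenvalue $\dd^\top\cc$, so $\II + \cc\dd^\top$ has eigenvalues $1 + \dd^\top\cc$ and $1$ with multiplicity $n-1$), then write $\MM + \cc\dd^\top = \MM(\II + \MM^{-1}\cc\,\dd^\top)$ and apply multiplicativity together with the special case using $\MM^{-1}\cc$ in place of $\cc$; the bordered-matrix argument simply packages both steps into one computation.
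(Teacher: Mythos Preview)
Your proof is correct: both block factorizations check out by direct multiplication, and equating the two determinant evaluations of the same bordered matrix yields the identity. The paper does not actually prove this lemma---it simply cites a standard reference (Meyer's textbook)---so your argument is strictly more complete than what appears there. The bordering/Schur-complement trick you use is one of the canonical derivations of this identity and would be entirely appropriate as a self-contained proof.
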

\begin{proof}
  See e.g., \cite{Meyer2000}.
\end{proof}

\begin{lemma}
  Let $\GG_1$ be a spanning subgraph of $\GG_2$. For any $w:\EE(K) \to
  \mathbb{R}_{\geq 0}$, $\LL^{w}_{\GG_2} \succeq
  \LL^{w}_{\GG_1}$ in which $\LL^{w}_{\GG}$ is the reduced weighted Laplacian
  matrix of $\GG$ when its edges are weighted by $w$.
  \label{th:subgraph}
\end{lemma}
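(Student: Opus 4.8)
The plan is to exploit the additive decomposition of the reduced weighted Laplacian into elementary reduced Laplacians that was introduced in the Preliminaries, namely $\LL = \sum_{\{u,v\} \in \EE} w(u,v)\,\aaa_{uv}\aaa_{uv}^\top$. Since $\GG_1$ is a \emph{spanning} subgraph of $\GG_2$, the two graphs share the same vertex set, so I would anchor the same vertex $v_0$ in both when forming their reduced incidence matrices; this guarantees that the column $\aaa_{uv} = \ee_u - \ee_v$ associated with any common edge $\{u,v\}$ is literally the same vector in both decompositions. Being a spanning subgraph also gives $\EE(\GG_1) \subseteq \EE(\GG_2)$.

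With a common anchor fixed, both $\LL^w_{\GG_1}$ and $\LL^w_{\GG_2}$ are sums of the same elementary terms, so the edges of $\GG_1$ cancel cleanly in the difference and I obtain
\begin{equation}
  \LL^w_{\GG_2} - \LL^w_{\GG_1}
  = \sum_{\mathclap{\{u,v\} \in \EE(\GG_2) \setminus \EE(\GG_1)}}
    w(u,v)\,\aaa_{uv}\aaa_{uv}^\top .
\end{equation}
The remaining step is to argue that the right-hand side is positive semidefinite. Each summand is a non-negative scalar $w(u,v) \geq 0$ times a rank-one outer product $\aaa_{uv}\aaa_{uv}^\top$, and for any test vector $\xx$ one has $\xx^\top \aaa_{uv}\aaa_{uv}^\top \xx = (\aaa_{uv}^\top \xx)^2 \geq 0$, so every summand lies in $\mathbb{S}^n_{\geq 0}$. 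Since the cone of positive semidefinite matrices is closed under addition, the entire sum is positive semidefinite, which is exactly the assertion $\LL^w_{\GG_2} \succeq \LL^w_{\GG_1}$.

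I do not anticipate a genuine obstacle here: the argument is essentially a bookkeeping identity followed by the elementary fact that a non-negative combination of PSD rank-one matrices is PSD. The only point requiring a little care is the bookkeeping itself, i.e. making explicit that the \emph{same} anchor is used for both graphs so that the common edges cancel and the difference reduces to a sum over $\EE(\GG_2)\setminus \EE(\GG_1)$; without fixing the anchor consistently the decompositions would not be directly comparable. This lemma is the matrix-level strengthening of monotonicity property~(5) of $t(\GG)$ stated earlier, and combined with Lemma~\ref{th:inverseOp} it will presumably feed the monotonicity/submodularity results by controlling $\LL^{-1}$ as edges are added.
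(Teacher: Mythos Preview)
Your proof is correct and follows essentially the same approach as the paper: write the difference $\LL^w_{\GG_2}-\LL^w_{\GG_1}$ as the sum $\sum_{\{u,v\}\in\EE(\GG_2)\setminus\EE(\GG_1)} w_{uv}\,\aaa_{uv}\aaa_{uv}^\top$ and observe that this is a non-negative combination of rank-one PSD matrices. Your added remarks about fixing a common anchor and the downstream use with Lemma~\ref{th:inverseOp} are accurate elaborations of what the paper leaves implicit.
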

\begin{proof}
  From the definition of the reduced weighted Laplacian matrix we have,
  \begin{equation}
    \LL^{w}_{\GG_2} - \LL^{w}_{\GG_1} = \sum_{\mathclap{\{u,v\} \in \EE(\GG_2)
    \setminus \EE(\GG_1)}}
    w_{uv} \,\,\aaa_{uv}\aaa_{uv}^\top \succeq \zero.
    \label{}
  \end{equation}
\end{proof}
\begin{proof}[Proof of Theorem~\ref{th:expected}]
  Define the following indicator function,
  \begin{equation}
    \mathbbm{1}_{\mathpzc{T}_{\GG}}(\TT) \triangleq
    \begin{cases}
      1 & {\TT \in \mathpzc{T}_{\GG}}, \\
      0 & {\TT \notin \mathpzc{T}_{\GG},}
    \end{cases}
    \label{eq:indicator}
  \end{equation}
  in which $\mathpzc{T}_{\GG}$ denotes the set of spanning trees of $\GG$.
  Now note that,
  \begin{align}
    {\mathbb{E}}_{{{\GG \sim \mathbb{G}(\GG^\circ,\pp)}}}\big[t_w(\GG)\big]
    & 
    =  \mathbb{E}_{{{\GG \sim
      \mathbb{G}(\GG^\circ,\pp)}}}
      \Big[\,\sum_{\mathclap{\TT \in
       \mathpzc{T}_{\GGc}}}
      \mathbbm{1}_{\mathpzc{T}_{\GG}}(\TT)\mathbb{V}_w(T) \Big] \\
    &  = \sum_{\mathclap{\TT
      \in \mathpzc{T}_{\GGc}}}
      \mathbb{E}_{{{{\GG \sim \mathbb{G}(\GG^\circ,\pp)}}}}
      \Big[\mathbbm{1}_{\mathpzc{T}_{\GG}}(\TT) \mathbb{V}_w({\TT})\Big] \\
    & =  \sum_{\mathclap{\TT
      \in \mathpzc{T}_{\GGc}}}
      \mathbb{P}\Big[\TT \in \mathpzc{T}_{\GG}\Big] \mathbb{V}_w(T) \\
    & =  \sum_{\mathclap{\TT
      \in \mathpzc{T}_{\GGc}}}
      \mathbb{V}_p(\TT) \mathbb{V}_{w}(\TT) \\
      & =\sum_{\mathclap{\TT
	\in \mathpzc{T}_{\GGc}}}
	\mathbb{V}_{w_p}(\TT) \\
	& = t_{w_p}(\GGc).
    \label{}
  \end{align}
  Here we have used the fact the $\mathbb{P}[T \in \mathpzc{T}_\GG]$ is equal to
  the probability of existence of every edge of $T$ in $\GG$, which is equal to
  $\mathbb{V}_p(\TT)$.
\end{proof}
\begin{proof}[Proof of Lemma~\ref{th:add}]
  Note that $\LL_{\GG^+} =
  \LL_{\GG} +
  w_{uv}\,\aaa_{uv}\aaa_{uv}^{\top}$. Taking the determinant, applying
  Lemma~\ref{th:rank1} and taking the $\log$ concludes the proof.
\end{proof}
\begin{proof}[Proof of Lemma~\ref{th:del}]
  The proof is similar to the proof of Lemma~\ref{th:add}.
\end{proof}
\begin{proof}[Proof of Theorem~\ref{th:treeSupermodular}]
  First recall that $\mathbb{V}_w(T)$ is positive for any $T$ by definition.
  \begin{enumerate}
    \item Normalized:
      $\tree(\varnothing) = 0$ by definition.
    \item Monotone: Let $G \triangleq (\VV,\EE \cup \{e\})$. Denote by
      $\mathpzc{T}_\GG^e$ the set of spanning trees of $\GG$ that contain
      $e$.
      \begin{align}
	\tree(\EE \cup \{e\}) & = \sum_{\mathclap{T \in \mathpzc{T}_{\GG}}} \mathbb{V}_w(T)
	= \sum_{\mathclap{T \in \mathpzc{T}_{\GG}^e}} 
	\mathbb{V}_w(T) + 
	\sum_{\mathclap{T \notin \mathpzc{T}_{\GG}^e}} \mathbb{V}_w(T) \\
	&= \sum_{\mathclap{T \in \mathpzc{T}_{\GG}^e}} 
	\mathbb{V}_w(T) +
	\tree(\EE) \geq \tree(\EE).
	\label{eq:sumTree}
      \end{align}
    \item Supermodular:
  $\tree$ is supermodular iff for all
  $\EE_1 \subseteq \EE_2 \subseteq \EE(K_n)$ and all $e\in \EE(K_n) \setminus
  \EE_2$,
  \begin{equation}
    \tree(\EE_2 \cup \{e\}) - \tree(\EE_2) \geq \tree(\EE_1 \cup
    \{e\}) - \tree(\EE_1).
    \label{eq:treesupproof}
  \end{equation}
  Define $\GG_1 \triangleq (\VV,\EE_1)$ and $\GG_2 \triangleq (\VV,\EE_2)$. As we showed in \eqref{eq:sumTree},
  \begin{align}
    \tree(\EE_1 \cup \{e\}) - \tree(\EE_1) & =  \sum_{\mathclap{T \in
      \mathpzc{T}_{\GG_1}^e}} 
	\mathbb{V}_w(T), \\
    \tree(\EE_2 \cup \{e\}) - \tree(\EE_2) & =  
    \sum_{\mathclap{T \in
      \mathpzc{T}_{\GG_2}^e}} 
	\mathbb{V}_w(T).
    \label{}
  \end{align}
  Therefore we need to show that $\sum_{{T \in \mathpzc{T}_{\GG_2}^e}}
  \mathbb{V}_w(T) \geq  \sum_{{T \in \mathpzc{T}_{\GG_1}^e}}
  \mathbb{V}_w(T)$. This inequality holds since $\mathpzc{T}_{\GG_1}^e \subseteq
  \mathpzc{T}_{\GG_2}^e$.
  \end{enumerate}
  
\end{proof}
\begin{proof}[Proof of Theorem~\ref{th:logTG}]
  $\,$\\
  \vspace{-0.5cm}
  \begin{enumerate}
    \item Normalized: By definition $\logTG(\varnothing) = \logtree(\EEi) - \logtree(\EEi) =
      0$.
    \item Monotone: We need to show that $\logTG(\EE \cup \{e\}) \geq
      \logTG(\EE)$. This is equivalent to showing that,
      \begin{align}
	\logtree(\EEi \cup \EE \cup \{e\}) \geq \logtree(\EEi \cup \EE).
	\label{eq:logtreeMon}
      \end{align}
      Now note that $(\VV,\EEi \cup \EE)$ is connected since
      $(\VV,\EEi)$ was assumed to be connected. Therefore we can apply 
      Lemma~\ref{th:add} on the LHS of \eqref{eq:logtreeMon}; i.e.,
      \begin{equation}
      \logtree(\EEi \cup \EE \cup \{e\}) = \logtree(\EEi \cup \EE) +
      \log(1+w_e\Delta_e).
	\label{}
      \end{equation}
      Therefore it sufficies to show that $\log(1+w_e\Delta_e)$ is non-negative.
      Since $(\VV,\EEi)$ is connected, $\LL$ is positive definite. Consequently $w_e \Delta_e  = w_e \aaa_e^\top
      \LL^{-1} \aaa_e > 0$ and hence $\log(1+w_e\Delta_e) > 0$.
    \item Submodular:
  $\logTG$ is submodular iff for all
  $\EE_1 \subseteq \EE_2 \subseteq \EE(K_n)$ and all $e\in \EE(K_n) \setminus
  \EE_2$,
  \begin{equation}
    \logTG(\EE_1 \cup \{e\}) - \logTG(\EE_1) \geq \logTG(\EE_2 \cup
    \{e\}) - \logTG(\EE_2).
    \label{eq:logTGSubmodular}
  \end{equation}
  After canceling $\logtree(\EEi)$ we need to show that,
  \begin{equation}
    \logtree(\EE_1 \cup \EEi \cup \{e\}) - \logtree(\EE_1 \cup \EEi) \geq
    \logtree(\EE_2 \cup \EEi \cup
    \{e\}) - \logtree(\EE_2 \cup \EEi).
    \label{eq:logTGSubmodular2}
  \end{equation}
  If $e \in \EEi$, both sides of \eqref{eq:logTGSubmodular2} become zero. Hence we
  can safely assume that $e \notin \EEi$.
  To shorten our notation let us define $\EE^*_i \triangleq \EE_i \cup \EEi$
  for $i=1,2$. Therefore \eqref{eq:logTGSubmodular2} can be rewritten as,
  \begin{equation}
    \logtree(\EE_1^* \cup \{e\}) - \logtree(\EE_1^*) \geq
    \logtree(\EE_2^* \cup
    \{e\}) - \logtree(\EE_2^* ).
    \label{eq:logTGSubmodular3}
  \end{equation}
  Recall that by assumption $(\VV,\EEi)$ is connected. Thus $(\VV,\EE_i^*)$ is
  connected for $i = 1,2$, and we can
  apply Lemma~\ref{th:add} on both sides of \eqref{eq:logTGSubmodular3}. After
  doing so we have to show that
  \begin{align} 
    \log(1+w_e\Delta_e^{\GG_1}) &\geq \log(1+w_e\Delta_e^{\GG_2})
    \label{eq:preCond}
  \end{align}
  where $\GG_i \triangleq (\VV,\EE_i \cup \EEi,w)$ for $i=1,2$. It is easy to
  see that \eqref{eq:preCond} holds iff $\Delta_e^{\GG_1} \geq
  \Delta_e^{\GG_2}$.
  Now note that
  \begin{align}
    \Delta_e^{\GG_1} - \Delta_e^{\GG_2} = \aaa_e^\top (\LL_{\GG_1}^{-1} -
    \LL_{\GG_2}^{-1}) \, \aaa_{e} \geq 0
    \label{}
  \end{align}
  since $\LL_{\GG_2} \succeq \LL_{\GG_1}$ ($\GG_1$ is a spanning subgraph of
  $\GG_2$), and therefore according to Lemma~\ref{th:inverseOp}
  $\LL_{\GG_1}^{-1} \succeq \LL_{\GG_2}^{-1}$.
  \end{enumerate}
\end{proof}
\begin{proof}[Proof of Theorem~\ref{th:cvx}]
  First note that \eqref{eq:i} directly follows from Theorem~\ref{th:expected}
  since $\LL(\ppp)$ is the reduced weighted Laplacian matrix of $\GG_\bullet$
  after scaling its edge weights by the sampling probabilities $\pi_1,\dots,\pi_s$.
  To prove \eqref{eq:ii} consider the following indicator function,
  \begin{equation}
    \mathbbm{1}_{\EE(\mathcal{H})}(e) = 
    \begin{cases}
      1 & e \in \EE(\mathcal{H}), \\
      0 & e \notin \EE(\mathcal{H}).
    \end{cases}
    \label{}
  \end{equation}
  Now note that $\mathbbm{1}_{\EE(\mathcal{H})}(e_i) \sim
  \operatorname{Bern}(\pi_i)$ for $i =1,\dots,s$. Therefore,
  \begin{align}
    {\mathbb{E}}_{{\mathcal{H} \sim
  \mathbb{G}(\GG_\bullet,\ppp_\bullet)}}
  \big[|\EE(\mathcal{H})|\big] & 
   = 
    {\mathbb{E}}_{{\mathcal{H} \sim
  \mathbb{G}(\GG_\bullet,\ppp_\bullet)}}
  \Big[ \sum_{i=1}^{s} \mathbbm{1}_{\EE(\mathcal{H})}(e_i) \Big] \\
  & =
  \sum_{i=1}^{s}{\mathbb{E}}_{{\mathcal{H} \sim
  \mathbb{G}(\GG_\bullet,\ppp_\bullet)}}
  \Big[\mathbbm{1}_{\EE(\mathcal{H})}(e_i) \Big]  \\
  & = \sum_{i=1}^{s} \pi_i \\
  & = \sum_{\mathclap{e_i \in \CCC^+}}\pi_i + \sum_{\mathclap{e_j \in \EEi}} 1 \\
  & = \|\ppp\|_1 + |\EEi|.
    \label{}
  \end{align}
\end{proof}
\begin{proof}[Proof of Theorem~\ref{th:Cher}]
  This theoreom is a direct application of Chernoff bounds for Poisson
  trials of independently sampling edges from $\CCC^+$ with probabilities
  specified by $\ppp^\star$.
\end{proof}

\subsection*{Generalizing Theorems~\ref{th:expected} and \ref{th:cvx}}
\label{app:genCvx}
The following theorem generalizes Theorem~\ref{th:expected} (and, consequently,
Theorem~\ref{th:cvx}).
Theorem~\ref{th:detGeneral} provides a 
similar interpretation for the convex relaxation approach designed by
\citet{Joshi2009} for the sensor selection problem with
linear measurement models.
\begin{theorem}
  Let $\{(\yy_i,\zz_i)\}_{i=1}^{m}$ be a collection of $m$ pairs of vectors in $\mathbb{R}^n$
  such that $m \geq n$. Furthermore, let $s_1,\dots,s_m$ be a collection of
  $m$ independent random variables such that  $s_i \sim \mathrm{Bern}(p_i)$ for
  some $p_i \in [0,1]$. Then we have,
  \begin{equation}
    \underset{\underset{\forall i \in [m]}{s_i \sim \mathrm{Bern}(p_i)}}{\mathbb{E}}
    \Big[\det
    \left(\sum_{i=1}^m s_i\yy_i \zz_i^\top \right)\Big] =
      \det
      \left(\sum_{i=1}^m p_i\yy_i \zz_i^\top \right).
    \label{}
  \end{equation}
  \label{th:detGeneral}
\end{theorem}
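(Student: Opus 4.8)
The plan is to reduce both sides of the identity to a common combinatorial expansion via the Cauchy--Binet formula, and then exploit independence. First I would collect the data into matrices: let $\mathbf{Y} \triangleq [\yy_1 \; \cdots \; \yy_m]$ and $\mathbf{Z} \triangleq [\zz_1 \; \cdots \; \zz_m]$ be the $n \times m$ matrices whose columns are the $\yy_i$ and $\zz_i$, and let $\mathbf{S} \triangleq \diag(s_1,\dots,s_m)$ and $\mathbf{P} \triangleq \diag(p_1,\dots,p_m)$ be the $m \times m$ diagonal matrices of the random selectors and their means. Then $\sum_{i} s_i \yy_i \zz_i^\top = \mathbf{Y}\mathbf{S}\mathbf{Z}^\top$ and $\sum_{i} p_i \yy_i \zz_i^\top = \mathbf{Y}\mathbf{P}\mathbf{Z}^\top$, so the claim becomes $\mathbb{E}\big[\det(\mathbf{Y}\mathbf{S}\mathbf{Z}^\top)\big] = \det(\mathbf{Y}\mathbf{P}\mathbf{Z}^\top)$.

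Next I would apply Cauchy--Binet to the product of the $n \times m$ matrix $\mathbf{Y}\mathbf{S}$ and the $m \times n$ matrix $\mathbf{Z}^\top$. Since $\mathbf{S}$ is diagonal, the $i$th column of $\mathbf{Y}\mathbf{S}$ is $s_i \yy_i$, so for each $n$-subset $I \in \binom{[m]}{n}$ the corresponding minor of $\mathbf{Y}\mathbf{S}$ factors as $\big(\prod_{i \in I} s_i\big)\det(\mathbf{Y}_I)$, where $\mathbf{Y}_I$ denotes the submatrix of $\mathbf{Y}$ with columns indexed by $I$. This yields
\[
  \det(\mathbf{Y}\mathbf{S}\mathbf{Z}^\top) = \sum_{I \in \binom{[m]}{n}} \Big(\prod_{i \in I} s_i\Big)\, \det(\mathbf{Y}_I)\,\det(\mathbf{Z}_I).
\]

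Then I would take the expectation term by term. The key observation is that each monomial $\prod_{i \in I} s_i$ involves \emph{distinct} indices (because $I$ has no repetitions), so by independence $\mathbb{E}\big[\prod_{i \in I} s_i\big] = \prod_{i \in I} \mathbb{E}[s_i] = \prod_{i \in I} p_i$. Substituting this and comparing with the Cauchy--Binet expansion of $\det(\mathbf{Y}\mathbf{P}\mathbf{Z}^\top)$, which produces exactly the same sum with each $s_i$ replaced by $p_i$, closes the proof.

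I do not anticipate a serious obstacle here; the only point requiring care is the bookkeeping in the Cauchy--Binet minors and confirming that the diagonal factor $\mathbf{S}$ contributes precisely $\prod_{i \in I} s_i$ to each minor. It is worth remarking that the argument never uses $s_i \in \{0,1\}$ beyond $\mathbb{E}[s_i] = p_i$: because only distinct indices appear in each minor, no factor $s_i^2$ is ever produced, so the identity in fact holds for \emph{any} independent real random variables with means $p_i$; the Bernoulli hypothesis is only what the specialization to $\mathbb{G}(\GGc,\pp)$ requires in order to recover Theorem~\ref{th:expected}.
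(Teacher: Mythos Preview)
Your proof is correct and follows essentially the same Cauchy--Binet route as the paper. The only difference is cosmetic: you factor $\prod_{i\in I} s_i$ out of each minor by multilinearity, whereas the paper argues via the rank-one structure that the $n\times n$ determinant over an $n$-subset vanishes unless every $s_i=1$; your version has the pleasant side effect, which you note, of extending the identity to arbitrary independent real $s_i$ with mean $p_i$.
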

\begin{proof}[Proof] 
  Let $\mathcal{S}_n \triangleq \binom{[m]}{n}$ be the set of all $n$-subsets
  of $[m]$.
 According to the Cauchy-Binet (C-B) formula we have 
 \begin{align}
   \underset{\underset{\forall i \in [m]}{s_i \sim \mathrm{Bern}(p_i)}}{\mathbb{E}}
   \Big[ \det
   \left(\sum_{i=1}^m s_i\yy_i \zz_i^\top \right)\Big] &
      \overset{\text{C-B}}{=}
      \underset{\underset{\forall i \in [m]}{s_i \sim
	\mathrm{Bern}(p_i)}}{\mathbb{E}} \Big[
	\sum_{\mathcal{Q} \in \mathcal{S}_n}\det
      \left(\sum_{i \in \mathcal{Q}} s_i\yy_i \zz_i^\top \right)\Big]  \\
      & \overset{\phantom{\text{C-B}}}{=}
	\sum_{\mathcal{Q} \in \mathcal{S}_n} 
	\underset{\underset{\forall i \in [m]}{s_i \sim
	  \mathrm{Bern}(p_i)}}{\mathbb{E}} \Big[ \det
	  \left(\sum_{i \in \mathcal{Q}} s_i\yy_i \zz_i^\top \right)\Big].
   \label{eq:firstExpansion}
 \end{align}
 Now note that $|\cQQ| = n$ and $\rank(\yy_i \zz_i^\top) = 1$. Therefore $\det \left( \sum_{i\in \cQQ} s_i
 \yy_i \zz_i^\top \right)$ is non-zero iff $s_i = 1$ for all $i \in \cQQ$.
 Thus for every $\cQQ \in \mathcal{S}_n$,
 \begin{equation}
   \det \left( \sum_{i\in \cQQ} s_i
 \yy_i \zz_i^\top \right) =
 \begin{cases}
   d_{\cQQ} \triangleq \det \left( \sum_{i \in \cQQ} \yy_i \zz_i^\top \right) &
   \text{with probability $p_{\cQQ} \triangleq \prod_{i \in \cQQ} p_i$}, \\
   0 & \text{with probability $1 - p_{\cQQ}$}.
 \end{cases}
   \label{}
 \end{equation}
 Taking the expectation yields
 \begin{equation}
   \underset{\underset{\forall i \in [m]}{s_i \sim \mathrm{Bern}(p_i)}}{\mathbb{E}}
   \Big[
   \det \left( \sum_{i\in \cQQ} s_i
 \yy_i \zz_i^\top \right) \Big] = p_{\cQQ} d_{\cQQ}.
   \label{eq:pqdq}
 \end{equation}
 Replacing \eqref{eq:pqdq} in \eqref{eq:firstExpansion} results in
 \begin{align}
   \underset{\underset{\forall i \in [m]}{s_i \sim \mathrm{Bern}(p_i)}}{\mathbb{E}}
   \Big[ \det
   \left(\sum_{i=1}^m s_i\yy_i \zz_i^\top \right)\Big] &
      =
	\sum_{\mathcal{Q} \in \mathcal{S}_n} 
	p_{\cQQ} d_{\cQQ} =  \sum_{\cQQ \in \mathcal{S}} \det \left( \sum_{i \in \cQQ} p_i \yy_i
 \zz_i^\top \right).
   \label{eq:secondExpansion}
 \end{align}
 Noting that the RHS in \eqref{eq:secondExpansion} is the
 Cauchy-Binet expansion of $\det \left(\sum_{i=1}^m p_i\yy_i \zz_i^\top
 \right)$ concludes the proof. 
\end{proof}

\end{appendix}

\end{document}